\newcommand{\pack}[1]{\langle {#1} \rangle}
\newcommand{\sch}{\Gamma}
\newcommand{\dom}{{\mathbf{dom}}}
\newcommand{\inst}{I}
\newcommand{\emp}{\epsilon}
\newcommand{\conc}{\cdot}
\newcommand{\paths}{\Pi}
\newcommand{\avar}{@}
\newcommand{\pvar}{\$}
\newcommand{\feat}{\Phi}
\newcommand{\subs}{\leq}
\newcommand{\nsubs}{\nleq}
\newcommand{\enc}[2]{#1 \conc a \conc #2 \conc a \conc #1 \conc b \conc #2}
\newcommand{\ru}[2]{#1 \leftarrow #2}
\newcommand{\weq}=
\newcommand{\stratum}{\Delta}
\newcommand{\Tr}{true}
\newcommand{\Fl}{false}
\newcommand{\pr}[1]{{\bf #1}}
\newcommand{\ps}[1]{\delta(#1)}
\newcommand{\isIn}{\stackrel{?}{\in}}
\newcommand{\A}{{\textsf{A}}}
\newcommand{\N}{{\textsf{N}}}
\newcommand{\I}{{\textsf{I}}}
\newcommand{\R}{{\textsf{R}}}
\renewcommand{\P}{{\textsf{P}}}
\newcommand{\E}{{\textsf{E}}}
\newcommand{\subu}{\mathit{SUB}}
\newcommand{\unpacko}{\mathit{UNPACK}}
\newcommand{\pv}[1]{\pvar #1}
\newcommand{\yesB}{edge from parent node [below] {yes}}
\newcommand{\noB}{edge from parent node [above] {no}}
\newcommand{\failN}[4]{
\node at (#1,#2) (#3) [align=center]
{#4};}
\theoremstyle{plain}
\newtheorem{theorem}{Theorem}[section]
\newtheorem{lemma}[theorem]{Lemma}
\newtheorem{proposition}[theorem]{Proposition}
\theoremstyle{definition}
\newtheorem{definition}[theorem]{Definition}
\theoremstyle{remark}
\newtheorem{example}[theorem]{Example}
\author{}
\title{Expressiveness within Sequence Datalog}
\date{}
\begin{document}

\author{Heba Aamer \\ 
\normalsize Hasselt University \\ 
\normalsize heba.mohamed@uhasselt.be \and 
Jan Hidders \\ 
\normalsize Birkbeck, University of London \\ 
\normalsize jan@dcs.bbk.ac.uk \and 
Jan Paredaens \\ 
\normalsize Universiteit Antwerpen \\ 
\normalsize jan.paredaens@uantwerpen.be \and
Jan Van den Bussche \\ 
\normalsize Hasselt University \\ 
\normalsize jan.vandenbussche@uhasselt.be}

\maketitle

\begin{abstract}

Motivated by old and new applications, we investigate Datalog as
a language for sequence databases.  We reconsider classical
features of Datalog programs, such as negation, recursion,
intermediate predicates, and relations of higher arities.  We
also consider new features that are useful for sequences,
notably, equations between path expressions, and ``packing''.
Our goal is to clarify the relative expressiveness of all these
different features, in the context of sequences.
Towards our goal, we establish a number of redundancy
and primitivity results, showing that certain features can, or
cannot, be expressed in terms of other features.  These results
paint a complete picture of the expressiveness relationships
among all possible Sequence Datalog fragments that can be formed
using the six features that we consider.

This paper is the extended version of a paper presented at PODS
2021~\cite{seqdatalog-pods}
\end{abstract}



\section{Introduction}

Interest in sequence databases dates back for at least three
decades \cite{3genmanifesto}.  For clarity, here, by sequence
databases, we do not mean relations where the tuples are ordered
by some sequence number or timestamp, possibly arriving in a
streaming fashion (e.g.,
\cite{chomicki_tql,ramaseq,ramasrql,zaniolo_streams_tods}).
Rather, we mean databases that allow the management of large
\emph{collections of sequences}.

In the early years, sequence databases were motivated by
applications in object-oriented software engineering
\cite{oomanifesto} and in genomics
\cite{bonnermecca_sequences,jo_dmls}.  While these applications
remain relevant, more recent applications
of sequence databases include the following.
\begin{itemize}
\item Process mining \cite{pmanifesto} operates on event logs,
which are sets of sequences.  Thus, sequence databases, and
sequence database query languages, can serve as enabling technology for
process mining.  For example, a typical query one may want to be
able to support is look for all logs in which every occurrence of
`complete order' is followed by `receive payment'.
\item
Graph databases have as main advantage over relational databases
that they offer convenient query primitives for retrieving paths.
Paths are, of course, sequences.  For example, the G-CORE graph
query language proposal \cite{g-core} supports the querying of
sequences stored in the database, separately from the graph;
these sequences do not even have to correspond to actual paths in the
graph.  An example query in such a context could be to return the
nodes that belong to all paths in a given set of paths.
\item
JSON Schema \cite{chili_jsonschema}
is based on the notion of JSON pointers, which are
sequences of keys navigating into nested JSON objects.
In J-Logic \cite{hidders2017j} we showed that modeling JSON
databases as sequence databases is very convenient for
defining JSON-to-JSON transformations in a logical, declarative
manner.

For a simple example, consider a JSON object $\it Sales$ that is
a set of key--value pairs, where keys are items; the value for an
item is a nested object holding the sales volumes for the item by
year.  Specifically, the nested object is again a set of
key--value pairs, where keys are years and values are numbers.
We can naturally view $\it Sales$ as a set of length-3 sequences
of the form item--year--value.  Restructuring the object to group
sales by year, rather than by item, then simply amounts to
swapping the first two elements of every sequence.  For another
example, checking if two JSON objects are deep-equal amounts to
checking equality of the corresponding sets of sequences.
\item
Logical approaches to information extraction
\cite{xlog,documentspanners}
model the result of
an information extraction as a sequence database.
\end{itemize}

Given the importance of sequences in various advanced database
applications, our research goal in this paper is to obtain a
thorough understanding of the role that different language features
play in querying sequence databases.  For such an investigation,
we need an encompassing query language in which these features
are already present, or can be added.  For this purpose we adopt
Datalog, a logical framework that is well established in database
theory research, and that has continued practical relevance
\cite{datalogreloaded,datalog2.0,datalog2.019}.

Indeed, Datalog for sequence databases, or Sequence Datalog, was
already introduced and studied by Bonner and Mecca in the late
1990s \cite{bonnermecca_sequences,meccabonner_termination}.  They
showed that, to make Datalog work with sequence databases, all we
have to do is to add terms built from sequence variables using
the concatenation operator.  In our work we refer to such terms
as \emph{path expressions} and refer to sequence variables as
\emph{path variables}.\footnote{We actually work with a minor
variant of Bonner and Mecca's language; while they additionally
introduce index terms, but only allow path expressions in the
heads of rules, we allow path expressions also in rule bodies,
and additionally introduce atomic variables. The two variants are
equivalent in that one can be simulated by the other requiring no
additional features such as negation or recursion.}
Bonner and Mecca studied computational completeness, complexity,
and termination guarantees for Sequence Datalog, and showed how
to combine Sequence Datalog with subcomputations expressed using
transducers.

Sequence Datalog was recently also considered for information
extraction (``document spanners''), with regular expression
matching built-in as a primitive \cite{spannerlog,peterprog}.
Such regular expressions may be viewed as very useful syntactic
sugar, as they are also expressible using recursion.  Adding regular
matching directly may be compared to Bonner and Mecca's
transducer extensions; the PTIME capturing result reported by
Peterfreund et al.\ \cite{peterprog} may be compared to
Corollary~3 of Bonner and Mecca \cite{bonnermecca_sequences}.

In the present work, we study the relative expressiveness of
query language features in the context of Sequence Datalog.  Some
of the features we consider are standard Datalog, namely,
recursion, stratified negation, and intermediate predicates.  The
latter feature actually comprises two features, since we
distinguish between monadic intermediate predicates and
intermediate predicates of higher arities.  While we omit regular
expression matching as a feature, we consider two further
features that are specific to sequences:

\begin{itemize}
\item
Equalities between path expressions, which we call
\emph{equations}, allow for the elegant expression of
pattern matching on sequences.
\item
\emph{Packing}, a feature introduced in J-Logic, is a versatile
tool that allows for subsequences to be ``bracketed'' and
temporarily treated as atomic values; they can be unpacked later.
\end{itemize}

The standard Datalog features, whose expressiveness is well
understood on classical relational structures
\cite{ahv_book,ef_fmt2}, need to be re-examined in the
presence of sequences; moreover, their interaction with the new
features needs to be understood as well.  For example, consider
recursion versus equations, and the query that checks whether an
input sequence $\$x$ consists exclusively of $a$'s.  (Path
variables are prefixed by a dollar sign.)  With an equation we
can simply write $\$x \cdot a = a \cdot \$x$ (using the dot for
concatenation).  Without equations (or other means to simulate
equations), however, this query can only
be expressed using recursion.  For another example, consider
monadic versus higher-arity intermediate predicates.
Classically, there are well-known arity hierarchies for Datalog
\cite{grohe_arity}.  In our setting, however, a unary relation
can already hold arbitrary-length sequences, and indeed, using a
simple coding trick, we will see that the arity feature is
actually redundant.

In our work, we have chosen to define expressiveness in terms of
the baseline class of ``flat unary queries'', namely, functions
from unary relations to unary relations, where both the input and
the output are just sets of plain, unpacked sequences.  In this
way, we avoid trivial tautologies such as ``packing is a
primitive feature, because without it, we cannot create packed
sequences''.  As a matter of fact, we will show that packing, although it
certainly is a convenient feature, is actually redundant for
expressing these flat unary queries.  A result in this direction
was already stated for J-Logic \cite{hidders2017j}, but the
technique used there to simulate packing requires recursion.  In the
present paper, we show that packing is redundant also in the
absence of recursion.  Our proof technique leverages associative
unification \cite{abdulrab1989solving}, and more specifically,
the termination of associative unification for particular cases
of word equations \cite{duran2018associative}.

Our further results can be summarized as follows.

\begin{enumerate}
\item
At first sight, equations seem to be a redundant feature, at
least in the presence of intermediate predicates.  Indeed,
instead of using an equation $e_1 = e_2$ as a subgoal, we can introduce an
auxiliary relation $T(e_1,e_2)$, and replace the equation by the
subgoal $T(e_1,e_1)$. (Our notation here is not precise but hopefully
enough to convey the idea).  With negated equations and
recursion, however, this simple trick does not work as it
violates stratification.  We still show, however, that
equations are redundant in the presence of both intermediate predicates
and negation.
\item
In the absence of intermediate predicates,
however, equations are a primitive feature.  Indeed, the
``only $a$'s'' query mentioned above, easily
expressed with an equation, is not expressible in the absence of
intermediate predicates.
\item
One can also, conversely, simulate intermediate predicates using
equations: a simple folding transformation works in the absence
of negation and recursion.
In the presence of negation or recursion, however, intermediate
predicates do add power.  This is fairly easy to show for recursion:
the squaring query ``for every path $p$ in the input, output
$a^{n^2}$, where $n$ is the length of $p$'' requires an intermediate
predicate in which the output can be constructed recursively.
In the presence of negation, the primitivity of intermediate
predicates can be seen to follow from the corresponding result
for classical Datalog (by quantifier alternation).  Some work has
still to be done, however, since the classical proof has to be
extended to account for path expressions and equations.
\item
It will not surprise the reader
that recursion is primitive in Sequence Datalog.  This can be
seen in many ways; probably the easiest is to use the above
squaring query, and to observe that
without recursion, the length of output sequences is at most
linear in the length of input sequences.  Another proof, that
also works for boolean queries, is by reduction to the classical
inexpressibility of graph connectivity in first-order logic.
As in the previous paragraph, the
reduction must account for the use of path expressions and
equations.
\item
A classical fact is that nonrecursive Datalog with stratified
negation is equivalent to the relational algebra.  We extend
the standard relational algebra by allowing path expressions in
selection and projection, and adding operators for unpacking and
for subsequences.  We obtain
a language equivalent to nonrecursive Sequence Datalog.
\end{enumerate}

\begin{figure}
\centering
\scalebox{0.82}{\includegraphics{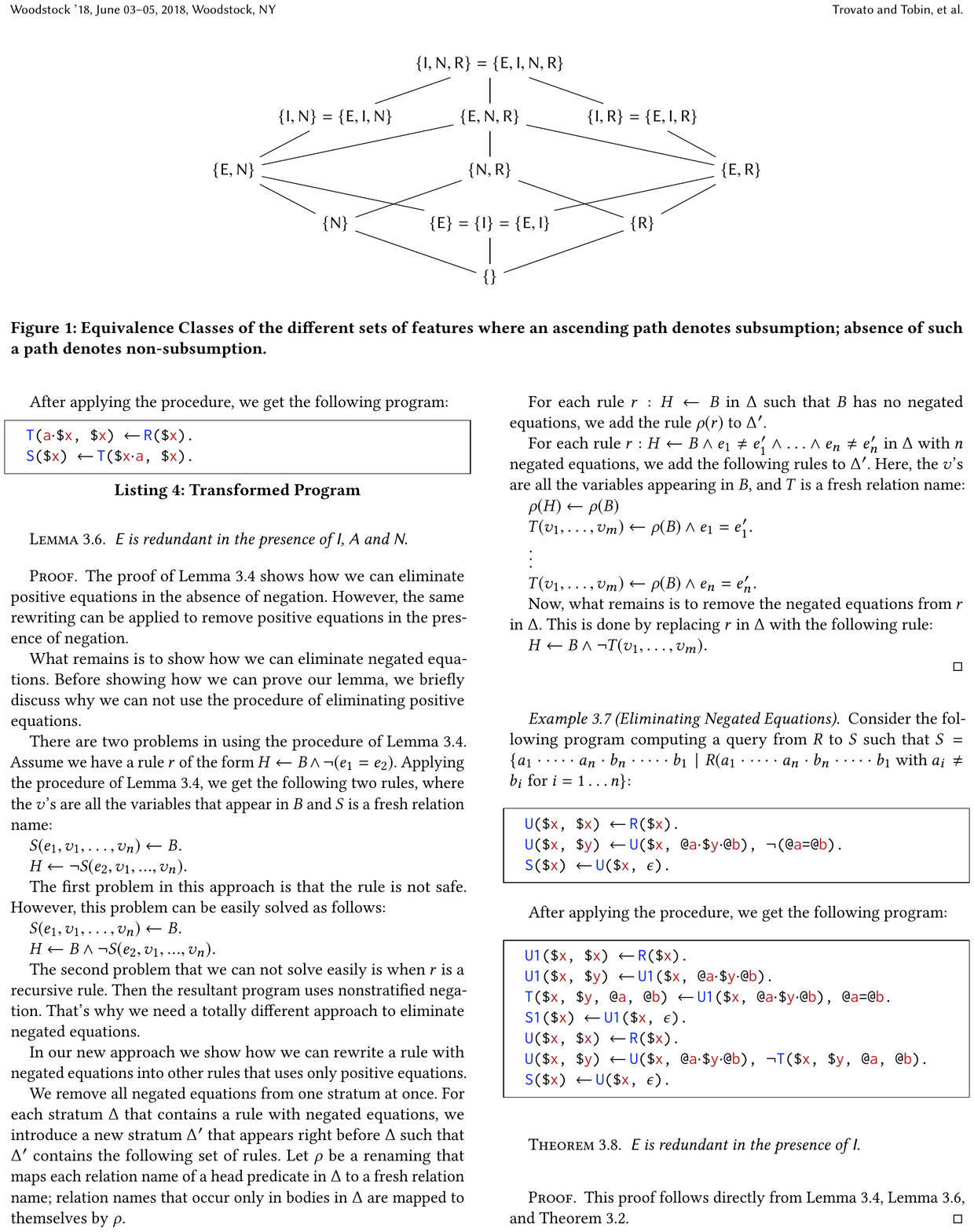}}
  \caption{Relative expressiveness of
  the different sets of Sequence Datalog features
  (Negation, Equations, Intermediate predicates, and Recursion;
  features Arity and Packing will turn out to be entirely
  redundant).
  An ascending path denotes subsumption; absence of such a path denotes
  non-subsumption.}
  \label{fig:eqvClass}
\end{figure}

Our results allow us to completely classify the sixteen possible
Sequence Datalog fragments in a Hasse diagram with respect to
their expressive power, as shown in Figure~\ref{fig:eqvClass}.
Some fragments are equivalent, as shown; also, the features for
packing and higher-arity intermedicate predicates are omitted,
since they are redundant independently of the presence or absence
of other features.

This paper is organized as follows.  In Section~\ref{secseq} we
define the sequence database model and the syntax and semantics
of Sequence Datalog. In Section~\ref{secquery} we introduce the
language features and rigorously define what we mean by one
fragment (set of features) being subsumed in expressive power by
another fragment.  Section~\ref{secred} presents our redundancy
(expressibility) results, and Section~\ref{secprim} presents our
primitivity (inexpressibility) results.
The
Hasse diagram of Figure~\ref{fig:eqvClass} is assembled in
Section~\ref{sechasse}.
Section~\ref{secalg} presents the relational algebra for sequence databases.
We conclude in Section~\ref{seconc},
where we also discuss additional related work.

\section{Sequence databases and Sequence Datalog} \label{secseq}

In this section we formally define the sequence database model
and the syntax and semantics of Sequence Datalog.  We do assume
some familiarity with the basic notions of classical Datalog
\cite{ahv_book}.

\subsection{Data model for sequence databases} \label{secmodel}

A \emph{schema}
$\sch$ is a finite set of relation names, each name with an associated
\emph{arity} (a natural number).  We fix a countably infinite universe
$\dom$ of atomic data elements, called \emph{atomic values}.  The sets of
\emph{packed values}, \emph{values}, and \emph{paths} are defined as the
smallest sets satifying the following:
\begin{enumerate}
  \item Every atomic value is a value.
  \item Every finite sequence of values is a path.  The empty
  path is denoted by $\emp$.

When writing down paths, we will separate the
elements by dots, where the $\conc$ symbol also serves as the
usual symbol for concatenation.  Recall that concatenation is
associative.
  \item If $p$ is a path, then $\pack{p}$ is a packed value.
  \item Every packed value is a value.
\end{enumerate}
The set of all paths is denoted by $\paths$.

For example, if $a$, $b$ and $c$ are atomic values, then $a \cdot
b \cdot a$ is a path; $\langle a \cdot b \cdot a\rangle$ is a
packed value; and $c \cdot \langle a \cdot b \cdot a \rangle$ is
again a path.

An \emph{instance} $\inst$ of a schema $\sch$ is a function that assigns
to each relation name $R \in \sch$ a finite $n$-ary relation on
$\Pi$, with $n$ the arity of $R$.

It is natural to identify a value $v$ with the one-length
sequence $v$. Thus values, in particular atomic values,
are also paths.  Hence, classical
relational database instances are a special case of instances as
defined here.  We refer to such instances as \emph{classical}.
So, in a classical instance, each relation name $R$ is assigned a
finite relation on $\dom$.

\subsection{Syntax of Sequence Datalog}

We assume disjoint supplies of
\emph{atomic variables} (ranging over atomic values)
and \emph{path variables} (ranging over paths).
The set of all variables is also disjoint from $\dom$.
We indicate atomic variables as $\avar x$ and path variables as $\pvar x$.
\emph{Path expressions} are defined just like paths, but with
variables added in.
Formally, we define the set of path expressions to be
the smallest set such that:
\begin{enumerate}
  \item Every atomic value is a path expression;
  \item Every variable is a path expression;
  \item If $e$ is a path expression, then $\pack{e}$ is a path expression;
  \item Every finite sequence of path expressions is a path expression.
\end{enumerate}

A \emph{predicate} is an expression of the form $P(e_1, \ldots, e_n)$,
with $P$ a relation name of arity $n$, and each $e_i$ a path expression.
We call $e_i$ the $i$th component of the predicate.
An \emph{equation} is an expression of the form $e_1 = e_2$,
with $e_1$ and $e_2$ path expressions.

Many of the following definitions adapt well-known Datalog
notions to our data model.

An \emph{atom} is a predicate or an equation.
A \emph{negated atom} is an expression of the form
$\lnot A$ with $A$ an atom.  We write a negated equation $\neg
e_1=e_2$ also as a nonequality $e_1 \neq e_2$.
A \emph{literal} is an atom (also called a positive literal)
or a negated atom (a negative literal).
A \emph{body} is a finite set of literals
(possibly empty).
A \emph{rule} is an expression of the form $H \leftarrow B$,
where $H$ is a predicate, called the \emph{head} of the rule,
and $B$ is a body.

We define the \emph{limited variables} of a rule as the smallest set
such that:
\begin{enumerate}
\item every variable occurring in a positive predicate in the
body is
limited; and
\item if all variables occurring in one of the sides of a positive
equation in the body are limited, then all variables occurring in
the other side are also limited.
\end{enumerate}
A rule is called \emph{safe} if all variables occurring in the rule are
limited.

Finally, a \emph{program} is a finite sequence of \emph{strata},
which are finite sets of safe rules, so that use of negation in
the program is stratified.  Recall that stratified negation means
that when a negated predicate $\neg P(e_1,\dots,e_n)$ occurs in
some stratum, then no rule in that stratum or later strata can
use $P$ in the head predicate.

Note that classical Datalog programs with stratified negation are
a special case of our notion of programs, where the only path
expressions used are atomic values or atomic variables.

\begin{example} \label{nfa}
An NFA can be represented by a unary relation $N$ (initial
states), a ternary relation $D$ (transitions), and a unary
relation $F$ (final states).  These would be classical relations.
Now consider a unary relation $R$ containing paths without
packing, i.e., strings of atomic values.
Then the following program, consisting of a single
stratum, computes in relation $A$ the strings from $R$ that are
accepted by the NFA\@.  Recall that atomic variables are prefixed
with @, and path variables with \$.
\begin{lstlisting}[style=Prolog-pygsty,escapeinside={[]}]
S(@q/-x, !) [] :- R(-x), N(@q).
S(@q2/-y, -z/@a) :- S(@q1/@a/-y, -z), D(@q1, @a, @q2).
A(-x) :- S(@q,-x), F(@q).
\end{lstlisting}
\end{example}

\begin{example} \label{cool}
Consider unary relations $R$ and $S$.
The following program, again in a single stratum, uses packing
and nonequalities to
check whether there are at least three different occurrences of a
string from $S$ as a substring in strings from $R$.
The boolean result is computed in the nullary relation $A$.
\begin{lstlisting}[style=Prolog-pygsty]
T(-u/<-s>/-v) :- R(-u/-s/-v), S(-s).
A :- T(-x),T(-y),T(-z), -x;-y, -x;-z, -y;-z.
\end{lstlisting}
\end{example}

\subsection{Semantics}

We have defined the notion of instance as an assignment of
relations over $\paths$ to relation names.
A convenient equivalent view of instances is as sets of facts.
A \emph{fact} is an expression of the form $R(p_1, \ldots, p_n)$
with $R$ a relation name of arity $n$, and each $p_i$ a path.
An instance $\inst$ of a schema $\sch$ is viewed as the set
of facts $\{R(p_1, \ldots, p_n) \mid
R \in \sch \text{ and } (p_1, \ldots, p_n) \in \inst(R)\}$.

A \emph{valuation} $\nu$ is a function that maps atomic variables
to atomic values and path variables to paths.
We say that $\nu$ is \emph{appropriate} for a syntactical construct
(such as a path expression, a literal, or a rule) if $\nu$ is
defined on all variables in that syntactical construct.
We can apply an appropriate valuation $\nu$ to a path expression
$e$ by substituting each variable in $e$ by its image under $\nu$
and obtain the path $\nu(e)$.
Likewise, we can apply an appropriate valuation to a predicate
and obtain a fact.

Let $L$ be a literal, $\nu$ a valuation appropriate for $L$,
and $\inst$ an instance.
The definition of when $\inst, \nu$ satisfies $L$ is
as expected: if $L$ is a predicate, then the fact $\nu(L)$
must be in $\inst$; if $L$ is an equation $e_1 = e_2$,
then $\nu(e_1)$ and $\nu(e_2)$ must be the same value.
If $L$ is a negated atom $\lnot A$, then $\inst, \nu$ must
not satisfy $A$.

A body $B$ is satisfied by $\inst, \nu$ if all its literals are.
Now a rule $r = H \leftarrow B$ is satisfied in $\inst$
if for every valuation $\nu$ appropriate for $r$ such that
$\inst, \nu$ satisfies $B$, also $\inst, \nu$ satisfies $H$.

The relation names occurring in a program are traditionally
divided into
EDB and IDB relation names.  The IDB relation names are the relation
names used in the head of some rules; the other relation
names are the EDB relation names.
Given a schema $\sch$, a program is said to be over $\sch$
if all its EDB relation names belong to $\sch$, and its
IDB relation names do not.
Now the semantics of programs is
defined as usual. A program is called \emph{semipositive}
if negated predicates only use EDB relation names.
We first apply the first stratum, which is semipositive, and
then apply each subsequent stratum as a semipositive program
to the result of the preceding strata.
So we only need to give semantics for semipositive programs.
Let $\pr P$ be a semipositive program over $\sch$, and let $\inst$
be an instance over $\sch$.
Let $\sch'$ be the set of IDB relation names of $\pr P$.
Then $\pr{P}(\inst)$ is the smallest instance over $\sch \cup \sch'$ that
satisfies all the rules of $\pr P$, and that agrees with $\inst$ on $\sch$.

Due to recursion, for some programs or instances, $\pr{P}(\inst)$
may be undefined, since instances are required to be finite.  We
also say in this case that $\pr P$ does not \emph{terminate} on
$\inst$.  If, in the course of evaluating a program $\pr P$ with
several strata on an instance $\inst$, one of the strata does not
terminate, we agree that the entire program $\pr P$ is undefined
on $\inst$.  As mentioned in the Introduction, Bonner and Mecca
have done substantial work on the question of guaranteeing
termination for Sequence Datalog programs.  In this paper, we
only consider programs that always terminate.

\begin{example}
The program from Example~\ref{nfa}, while recursive,
is guaranteed to terminate on every instance.  In contrast, the following
two-rule program will not terminate on any instance:
\begin{lstlisting}[style=Prolog-pygsty]
T(a).
T(a/-x) :- T(-x).
\end{lstlisting}
\end{example}

\section{Features, fragments, and queries} \label{secquery}

In this paper, we consider six possible features that a program
may use, each identified by a letter, spelled out as follows.

\begin{description}
\item[Arity]
A program \emph{uses arity} (has the
$\A$-feature) if it contains at least one predicate of arity greater than one.
\item[Recursion]
A program \emph{uses recursion} (has the $\R$-feature)
if there is a cycle in its dependency graph.\footnote{The nodes
of this graph are the IDB relation names, and there is an edge
from $R_1$ to $R_2$ if $R_2$ occurs in the body of a rule with
$R_1$ in its head predicate.}
\item[Equations]
A program \emph{uses equations} (has the
$\E$-feature) if it contains at least one equation in some rule.
\item[Negation]
A program \emph{uses negation} (has the
$\N$-feature) if it contains at least one negated atom in some rule.
\item[Packing] A program \emph{uses packing} (has the
$\P$-feature) if a path expression of the form $\pack{e}$ occurs in some rule.
\item[Intermediate predicates] A program \emph{uses
intermediate predicates} (has the $\I$-feature) if it involves at least two
different IDB relation names.
\end{description}

Let $\feat = \{\A,\E,\I,\N, \allowbreak \P,\R\}$ be the set of
all features.  A subset of $\feat$ is called \emph{a fragment}.
A program $\pr P$ is said to belong to a fragment $F$ if it uses
only features from $F$.

\begin{example} \label{exfragment}
The following program belongs to fragment $\{\E\}$.  It computes,
in relation $S$, all paths from $R$ that consist exclusively of
$a$'s.
\begin{lstlisting}[style=Prolog-pygsty]
  S(-x) :- R(-x),  a/-x=-x/a.
\end{lstlisting}
The following program does the same, but belongs to fragment
$\{\A,\I,\R\}$:
\begin{lstlisting}[style=Prolog-pygsty]
  T(-x, -x) :- R(-x).
  T(-x, -y) :- T(-x, -y/a).
  S(-x) :- T(-x, !).
\end{lstlisting}
\end{example}

\subsection{Queries and subsumption among fragments}

Our goal is to compare the different fragments with respect to
their power in expressing queries.  Our methodology is to do this
relative to a baseline class of queries that do not presuppose
any feature to begin with.  We next define these queries formally.

We call a schema \emph{monadic} if each of its relation names has
arity zero or one.  Also, we call an instance \emph{flat} if it
contains no occurrences of packed values.

Given a monadic schema $\sch$ and relation name $S \not \in \sch$
of arity at most one, a \emph{query} from $\sch$ to $S$ is a
total mapping from flat instances over $\sch$ to flat instances
over $\{S\}$.
A program $\pr P$ is said to \emph{compute} such a query if
\begin{enumerate}
  \item $\pr P$ is over $\sch$;
  \item $\pr P$ terminates on every flat instance of $\sch$;
  \item $S$ is an IDB relation of $\pr P$; and
  \item $\pr{P}(\inst)(S)$ equals $Q(\inst)$ for every flat
  instance $\inst$ of $\sch$.
\end{enumerate}

We now say that fragment $F_1$ is \emph{subsumed} by fragment
$F_2$, denoted by $F_1 \subs  F_2$, if every query computable by
a program in $F_1$ is also computable by a program in $F_2$.
Note that it is possible, for different $F_1$ and $F_2$,
that $F_1 \subs F_2$ and $F_2 \subs F_1$.  Such two fragments are
equivalent in expressive power.  There will turn out to be
11 equivalence classes; in Section~\ref{sechasse} we will
give a theorem that will characterize the subsumption relation
as shown in Figure~\ref{fig:eqvClass}.

\subsection{Redundancy and primitivity}

We will explore the subsumption relation by investigating the
redundancy or primitivity of the different features with respect
to other features.  A feature might be redundant in an absolute
sense, in that it can be dropped from any fragment without
decrease in expressive power.
This is a very strong notion of redundancy, and we
cannot expect it to hold for most features.  Yet a more relative
notion of redundancy may hold, meaning that some feature does
not contribute to expressive power, on condition that some other
features are already present, or are absent.  This leads to the
following notions.

\begin{definition}[Redundancy]
  Let $X$ be a feature and let $Y$ and $Z$ be sets of features.
  \begin{itemize}
    \item $X$ is redundant if $F \subs F-\{X\}$ for every fragment $F$.
    \item $X$ is redundant in the presence of $Y$ if $F \subs F-\{X\}$ for
    every fragment $F$ such that $Y \subseteq F$.
    \item $X$ is redundant in the absence of $Z$ if $F \subs F-\{X\}$ for
    every fragment $F$ such that $Z$ is disjoint from $F$.
    \item $X$ is redundant in the presence of $Y$ and absence of $Z$
    if $F \subs F-\{X\}$ for every fragment $F$ such that $Y \subseteq F$ and
    $Z$ is disjoint from $F$.
  \end{itemize}
\end{definition}

Similarly, but conversely,
a feature might be primitive in an absolute sense, in
that dropping it from a fragment always strictly decreases the
expressive power.  Then again, for other features only
more relative notions of primitivity may hold.

\begin{definition}[Primitivity]
  Let $X$ be a feature and let $Y$ and $Z$ be sets of features.
  Recall that $\feat$ is the set of all features.
  \begin{itemize}
    \item $X$ is primitive if $\{X\} \nsubs \feat - \{X\}$.

    \item $X$ is primitive in the presence of $Y$ if
    $\{X\} \cup Y \nsubs \feat - \{X\}$.

    \item $X$ is primitive in the absence of $Z$ if
    $\{X\} \nsubs \feat - (\{X\} \cup Z)$.
  \end{itemize}
\end{definition}

\section{Expressibility results} \label{secred}

In this section we show various expressibility results that lead
to absolute or relative redundancy results for various features.

\subsection{Arity}

Using a simple encoding trick we can see that arity is redundant.
Indeed, let $a$ and $b$ be two different atomic values.  For any
paths $s_1$, $s_2$, $s_1'$ and $s_2'$, we have the following:

\begin{lemma} \label{lemenc}
$(s_1,s_2)=(s_1',s_2')$ if and only if
$$ \enc{s_1}{s_2} = \enc{s_1'}{s_2'}. $$
\end{lemma}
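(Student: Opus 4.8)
The plan is to treat the claimed biconditional as a small word equation over the atomic values. The ``only if'' direction is immediate: if $s_1 = s_1'$ and $s_2 = s_2'$, then $\enc{s_1}{s_2}$ and $\enc{s_1'}{s_2'}$ are literally the same path. All the work is in the ``if'' direction, where from the single equation $\enc{s_1}{s_2} = \enc{s_1'}{s_2'}$ I must recover \emph{both} $s_1 = s_1'$ and $s_2 = s_2'$. The genuine subtlety is that $s_1, s_2, s_1', s_2'$ may themselves contain the atomic values $a$ and $b$, so one cannot simply locate the separators by scanning the common path for occurrences of $a$ and $b$.

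The key observation I would exploit is a length/midpoint argument. Write $W$ for the common path and compare total lengths: this gives $|s_1| + |s_2| = |s_1'| + |s_2'| =: L$, where lengths count value positions. Now in the expansion $s_1 \conc a \conc s_2 \conc a \conc s_1 \conc b \conc s_2$ the \emph{second} explicit $a$ always sits at position $L+2$, i.e.\ at the exact midpoint of $W$ (which has length $2L+3$), \emph{regardless} of how the length $L$ is split between $s_1$ and $s_2$; the same holds for the primed decomposition. Hence both decompositions cut $W$ at the same midpoint into a length-$(L+1)$ prefix, a single $a$, and a length-$(L+1)$ suffix, forcing the prefixes to coincide and the suffixes to coincide. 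This decouples the problem into the two sub-equations
\[
 s_1 \conc a \conc s_2 = s_1' \conc a \conc s_2'
 \qquad\text{and}\qquad
 s_1 \conc b \conc s_2 = s_1' \conc b \conc s_2',
\]
which share the \emph{same} four components but differ only in the middle separator.

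With these two equations in hand, the final step pins down the boundary using $a \neq b$. I would argue by contradiction: if, say, $|s_1| < |s_1'|$, then from the first equation $s_1$ is a proper prefix of $s_1'$, and the symbol of $W$ at position $|s_1| + 1$ reads as $a$ from the first equation (it is the explicit separator on the left-hand side) but as $b$ from the second equation (same position, same left-hand side structure) --- impossible since $a \neq b$. The symmetric case $|s_1| > |s_1'|$ is identical, so $|s_1| = |s_1'|$ and therefore $|s_2| = |s_2'|$. Reading off equal-length prefixes and suffixes from either sub-equation then yields $s_1 = s_1'$ and $s_2 = s_2'$.

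I expect the main obstacle to be conceptual rather than computational: the naive attempt to match separators position-by-position fails precisely because the components may contain $a$ and $b$, and the resolution is to notice that the second $a$ is pinned to the midpoint independently of the split. Once that is seen, everything downstream is routine. I would take a little care over the degenerate cases in which $s_1$, $s_2$, $s_1'$, or $s_2'$ equals the empty path $\emp$, but these are absorbed uniformly by the length bookkeeping and require no separate argument.
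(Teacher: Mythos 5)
Your proof is correct and follows essentially the same route as the paper's: both split the common path at the midpoint occurrence of $a$ (which your length bookkeeping makes explicit, where the paper just observes that $a$ ``appears in the middle of both sequences'') to obtain the two sub-equations $s_1 \conc a \conc s_2 = s_1' \conc a \conc s_2'$ and $s_1 \conc b \conc s_2 = s_1' \conc b \conc s_2'$, and then refute $|s_1| \neq |s_1'|$ by showing a single symbol of $s_1'$ would have to be both $a$ and $b$. The paper phrases that last step by factoring $s_1' = s_1 \conc x$ and noting $x$ must begin with both $a$ and $b$, which is the same argument as your position-$(|s_1|+1)$ comparison.
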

\begin{proof}
  The if-direction is trivial.
  For the only-if direction, we consider $\enc{s_1}{s_2} = \enc{s_1'}{s_2'}$
  and we observe that $a$ appears in the middle of both sequences.
  Hence,
  \begin{itemize}
    \item[(a)] $s_1 \conc a \conc s_2 = s_1' \conc a \conc s_2'$ and
    \item[(b)] $s_1 \conc b \conc s_2 = s_1' \conc b \conc s_2'$.
  \end{itemize}

For the sake of contradiction,
let us assume $|s_1| < |s_1'|$. Then $s_1'= s_1 \conc x$ for a
nonempty sequence $x$.  Thus, equation (a) can be rewritten as
$s_1 \conc a \conc s_2 = s_1 \conc x \conc a \conc s_2'$, which
simplifies to $a \conc s_2 = x \conc a \conc s_2'$.  Hence the
sequence $x$ must start with $a$.  In the same way, however, we can deduce
from (b) that $x$ must start with $b$.  Hence, the assumption we
made is false.

Analogously, $|s_1| > |s_1'|$ can be seen to be false as well, so we know that
$|s_1| = |s_1'|$.  Then clearly $|s_2| = |s_2'|$
    as well.
    Hence, from (a) and (b) we get that $s_1 = s_1'$ and $s_2 = s_2'$.
\end{proof}

Using this encoding, arities higher than one can be reduced by
one.  Since we can do this repeatedly, we obtain:

\begin{theorem} \label{thm:arity}
Arity is redundant.
\end{theorem}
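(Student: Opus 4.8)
The plan is to show that any query computed by a program $\pr P$ in a fragment $F$ can also be computed by a program $\pr{P}'$ in $F - \{\A\}$, that is, using only relations of arity at most one. Since a query goes from a monadic schema $\sch$ to an output relation $S$ of arity at most one, both the EDB relations and the designated output relation $S$ are already monadic; the only relations that may have higher arity are the internal IDB relations. So it suffices to encode each higher-arity IDB relation as a unary one, using the pairing encoding validated by Lemma~\ref{lemenc}.

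Concretely, I would reduce arity one unit at a time. Fix an IDB relation $R$ of arity $n \geq 2$ and introduce a fresh relation name $R'$ of arity $n-1$. Rewrite $\pr P$ by replacing every occurrence of $R$---in rule heads and bodies, positive as well as negated---according to $R(e_1,\dots,e_n) \mapsto R'(e_1,\dots,e_{n-2},\enc{e_{n-1}}{e_n})$, folding the last two components into the single path expression $\enc{e_{n-1}}{e_n}$. This lowers the arity of $R$ by one without touching any other relation. The transformation is then iterated: the total excess arity $\sum_R(\mathrm{arity}(R)-1)$ over IDB relations of arity at least two strictly decreases at each step, so after finitely many steps every IDB relation---hence every predicate---has arity at most one, and $\pr{P}'$ no longer uses the $\A$-feature.

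For correctness I would maintain the invariant that, at every stratum's fixpoint, $R'$ holds exactly $\{(p_1,\dots,p_{n-2},\enc{p_{n-1}}{p_n}) : (p_1,\dots,p_n)\in R\}$. The head rewriting guarantees that every fact derived for $R'$ is of this form, so $R'$ contains only genuine encodings. For a positive body atom, a valuation $\nu$ satisfies $R'(e_1,\dots,e_{n-2},\enc{e_{n-1}}{e_n})$ iff there is a tuple $(p_1,\dots,p_n)\in R$ with $\nu(e_i)=p_i$ for $i\le n-2$ and $\enc{\nu(e_{n-1})}{\nu(e_n)}=\enc{p_{n-1}}{p_n}$; by the only-if direction of Lemma~\ref{lemenc} the last condition forces $\nu(e_{n-1})=p_{n-1}$ and $\nu(e_n)=p_n$, so the rewritten atom is satisfied by exactly the same valuations as the original $R(e_1,\dots,e_n)$. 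Because the encoding is injective, a negated atom $\lnot R(\dots)$ likewise behaves identically after rewriting. An induction on the strata then yields $\pr{P}'(\inst)(S)=\pr P(\inst)(S)$ for every flat instance $\inst$, and termination is inherited, since the encoding sets up a bijection between the tuples of $R$ and those of $R'$, so finiteness of all relations is preserved.

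It remains to check that no forbidden feature is introduced. The rewriting only ever builds path expressions by concatenating existing subexpressions with the two fixed atomic constants $a$ and $b$; it adds no equation, negation, packing, recursion, or intermediate predicate. In particular the dependency graph is preserved up to renaming, so recursion is neither created nor destroyed; the number of IDB relation names is unchanged, so the $\I$-feature is preserved; the pattern of negated IDB predicates across strata is unchanged, so stratification is maintained; and safety is preserved, since all variables of $e_{n-1}$ and $e_n$ still occur inside the positive predicate $R'(\dots)$ whenever they occurred in $R(\dots)$. Hence $\pr{P}'$ lies in $F-\{\A\}$ and computes the same query, giving $F \subs F-\{\A\}$. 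The one step requiring genuine care is the body-matching and negation argument, where the injective (only-if) direction of Lemma~\ref{lemenc} is exactly what guarantees that folding two components into one encoded path neither loses nor confuses solutions.
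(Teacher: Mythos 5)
Your proof is correct and follows essentially the same route as the paper: the paper establishes the injectivity of the encoding $\enc{s_1}{s_2}$ in Lemma~\ref{lemenc} and then simply remarks that arities greater than one can be reduced by one repeatedly (illustrating with the reversal example). You have merely spelled out the details (the invariant, the handling of negated atoms via injectivity, preservation of safety, stratification, and the other features) that the paper leaves implicit.
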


\begin{example}
Consider the following program which computes in $S$
  the reversals of the paths in $R$:
\begin{lstlisting}[style=Prolog-pygsty,escapeinside={[]}]
  T(-x, !) [] :- R(-x).
  T(-x, -y/@u) :- T(-x/@u, -y).
  S(-x) :- T(!, -x).
\end{lstlisting}
The same query can be expressed without arity as follows:
\begin{lstlisting}[style=Prolog-pygsty,label=lst:exar2]
  T(-x/a/a/-x/b) :- R(-x).
  T(-x/a/-y/@u/a/-x/b/-y/@u) :- T(-x/@u/a/-y/a/-x/@u/b/-y).
  S(-x) :- T(a/-x/a/b/-x).
\end{lstlisting}
\end{example}

\subsection{Equations}

In the presence of $\I$ and $\A$, positive equations are readily
seen to be redundant, by introducing an auxiliary intermediate
predicate in the program. We only give an example:

\begin{example}
Recall the program from Example~\ref{exfragment}:
  \begin{lstlisting}[style=Prolog-pygsty]
  S(-x) :- R(-x),  a/-x=-x/a.
  \end{lstlisting}
The same query can be computed without equations as follows:
  \begin{lstlisting}[style=Prolog-pygsty]
  T(a/-x, -x) :- R(-x).
  S(-x) :- T(-x/a, -x).
  \end{lstlisting}
\end{example}

This simple method works only in the absence of negation,
because, when applied to a negated equation in a rule that
belongs to a recursive stratum, stratification is violated.
However, negated equations can be handled by another method:

\begin{lemma}
$\E$ is redundant in the presence of $\I$, $\A$ and
$\N$.
\end{lemma}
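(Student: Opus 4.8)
The plan is to transform any program $\pr P$ in a fragment $F \supseteq \{\I,\A,\N\}$ into an equivalent program that uses no equations but only features already present in $F$, eliminating equation occurrences one at a time. Positive equations are removed by the intermediate-predicate-plus-unification trick already illustrated above (store the two sides as separate components of an auxiliary predicate and force them to coincide by reusing a single path variable); this needs only $\I$ and $\A$, and it never turns a nonrecursive program into a recursive one. The real work is the negated equations $e_1 \neq e_2$, for which the naive trick fails exactly when the equation sits in a recursive stratum. As the key device for that case I first compute, in the very first stratum, the finite set $\mathit{Atom}$ of all atomic values that can ever occur, by extracting single atoms from the input with rules such as $\mathit{Atom}(\avar a) \leftarrow R(\pvar u \conc \avar a \conc \pvar v)$ for each EDB relation $R$, together with facts for the program's atomic constants, and from it the diagonal relation $\mathit{AtEq}(\avar a,\avar a)\leftarrow \mathit{Atom}(\avar a)$. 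The point is that no rule creates new atomic values (concatenation reshuffles values, packing produces packed values), so $\mathit{Atom}$ and $\mathit{AtEq}$ depend only on the EDB; hence $\neg\mathit{AtEq}$ may be used safely in any later stratum, recursive or not.

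If $\R \notin F$ then $\pr P$ is nonrecursive, and every negated equation occurs in a rule whose head predicate does not transitively depend on itself. For such a rule $H \leftarrow B_0,\ e_1 \neq e_2$ with variable tuple $\bar z$, I use the positive unification trick to define a relation $N_r(\bar z)$ that holds exactly for those $\bar z$ with $\nu(e_1)=\nu(e_2)$ under the positive part of $B_0$, and then replace the rule by $H \leftarrow B_0,\ \neg N_r(\bar z)$. Since $N_r$ depends only on predicates that the head already depends on, a topological re-stratification places $N_r$ strictly earlier, so the new negation is stratified; and since a nonrecursive program has an acyclic dependency graph, this reshuffling always exists. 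Safety is preserved because a negated equation never limits a variable, so all variables of the original rule are already limited by its positive literals. This method copes with packed operands for free, as the shared path variable matches an arbitrary path regardless of its internal structure.

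If $\R \in F$ then recursion is available and I may leave negated equations inside their (possibly recursive) stratum, reducing path-inequality to atomic-inequality by a parallel scan. Writing $B_0^{+}$ for the positive literals of $B_0$, I seed a work predicate $\mathit{Chk}_r(\bar z, e_1, e_2) \leftarrow B_0^{+}$, peel equal atomic heads with $\mathit{Chk}_r(\bar z, \pvar p, \pvar q) \leftarrow \mathit{Chk}_r(\bar z, \avar a \conc \pvar p, \avar a \conc \pvar q)$, and flag a difference into a predicate $\mathit{Dif}_r(\bar z)$ whenever the heads are distinct atoms (via $\neg\mathit{AtEq}$) or one side is empty while the other is not; replacing the equation by the positive subgoal $\mathit{Dif}_r(\bar z)$ then realises $e_1 \neq e_2$. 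Crucially, $\mathit{Chk}_r$ and $\mathit{Dif}_r$ occur positively and depend positively on the current IDB, while the only negation is on the EDB-level $\mathit{AtEq}$, so stratification is respected even though the operands depend on the fixpoint. Each peeling step strictly shortens both paths, so termination is preserved. When $\P \in F$ the operands may contain packed elements; these are handled by extra rules that strip a packed head $\pack{\pvar s}$ and compare its contents by a nested instance of the same machinery (equivalently, by first flattening packed values to bracketed atomic codes), which is possible precisely because packing is then available, and operands can be packed only when $\P \in F$.

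The crux, and the reason the easy trick does not suffice in general, is the negated equation buried in a recursive stratum: there the values $\nu(e_1),\nu(e_2)$ are produced by the very fixpoint being computed, so one cannot first tabulate their equality in an earlier stratum and then negate it without violating stratification. The whole construction is organised to funnel all data-dependence into positively occurring predicates and to confine negation to the single EDB-determined relation $\mathit{AtEq}$; verifying that this funnelling is correct, in particular the parallel-scan characterisation of inequality and the packed-value comparison, is the main technical burden, whereas correctness, safety, termination, and membership of the result in $F-\{\E\}$ then follow routinely.
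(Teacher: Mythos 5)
Your overall strategy is sound and your nonrecursive branch essentially coincides with the easy case the paper dismisses in a sentence, but for the crucial case --- a negated equation inside a recursive stratum --- you take a genuinely different route from the paper. The paper's device is an \emph{over-approximation}: it duplicates the offending stratum $\stratum$ into a fresh stratum $\stratum'$ placed just before it, with all negated equations simply dropped and head predicates renamed by $\rho$; since dropping negative conditions only enlarges the fixpoint, $\rho(B)$ holds whenever $B$ does, so the auxiliary relation $T(v_1,\dots,v_m) \leftarrow \rho(B) \land e_i = e_i'$ (a \emph{positive} equation, eliminated by the standard unification trick) tabulates exactly the offending tuples, and $H \leftarrow B \land \lnot T(v_1,\dots,v_m)$ is then stratified because $T$ lives in the earlier stratum. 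Your device instead keeps everything inside the recursive stratum and reduces path inequality to a positive recursive ``parallel scan'' plus negation of a single EDB-determined diagonal relation $\mathit{AtEq}$. Both are correct in outline, and your correctness argument (monotone auxiliary predicates, negation confined to an earlier stratum, peeling preserves termination) goes through. What the paper's approach buys is uniformity and economy: no case split on whether $\R$ belongs to the fragment, no new recursion beyond copying the stratum, and --- importantly --- packed operands are handled for free, since the tabulated condition is just a positive equation. What your approach buys is a conceptual point the paper does not make explicit: path inequality over flat paths is positively, recursively definable from atomic inequality on the active domain, so all negation can be funnelled into one EDB-level relation.

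The one place where your argument falls short of a proof is the packed-operand case of the recursive branch. Peeling \emph{identical} packed heads via a repeated variable $\pack{\pvar s}$ is fine, but flagging that two packed heads \emph{differ} requires a positive characterisation of inequality of their contents, i.e.\ a nested ``$\mathit{PathDif}$'' relation over a suitably generated and finite set of pairs of subpaths (and this recursion must bottom out through arbitrary packing depth, which on flat inputs is bounded only by the program, not by a fixed constant per rule). Your single sentence does not construct this, and the alternative you mention (flattening packed values to delimiter codes) is essentially the machinery of the paper's separate packing-redundancy theorem, so invoking it here needs an explicit non-circularity check. This is a gap in rigor rather than a wrong idea, and it disappears entirely if you first note that one may assume the program is packing-free --- but as written it is the weak link, and it is precisely the complication that the paper's over-approximation sidesteps.
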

\begin{proof}
Positive equations can be handled as above.  For each stratum
$\stratum$ that contains negated equations, we insert a new
stratum $\stratum'$, right before $\stratum$, consisting of the
following rules.  Let $\rho$ be a renaming that maps each head
relation name in $\stratum$ to a fresh relation name; relation
names that occur only in bodies in $\stratum$ are mapped to
themselves by $\rho$.

For each rule $H \leftarrow B$ in $\stratum$ without negated
  equations, we add the rule $\rho(H) \gets \rho(B)$ to $\stratum'$.

For each rule $r: H \leftarrow B \land e_1 \neq e_1' \land \ldots
\land e_n \neq e_n'$ in $\stratum$ with $n$ negated equations, we
again add $\rho(H) \gets \rho(B)$ to $\stratum'$.  Moreover,
using a fresh relation name $T$, we add the following $n$ rules
for $i=1,\dots,n$:
\begin{tabbing}
$T(v_1, \ldots, v_m) \leftarrow \rho(B) \land e_i = e_i'$
\end{tabbing}
Here, the $v$'s are all variables appearing in $B$.

Finally, in $\Delta$, we replace $r$ by the following rule:
\begin{tabbing}
$H \leftarrow B \land \lnot T(v_1, \ldots, v_m)$.
\end{tabbing}
\end{proof}

\begin{example}
The following program retrieves in $S$ those paths from $R$
that can be written as $a_1 \cdots a_n b_n \cdots b_1$ with $a_i
\neq b_i$ for $i=1,\dots,n$:
  \begin{lstlisting}[style=Prolog-pygsty,label=lst:exeq3]
  U(-x, -x) :- R(-x).
  U(-x, -y) :- U(-x, @a/-y/@b), @a;@b.
  S(-x) :- U(-x, !).
  \end{lstlisting}
Applying the method to eliminate negated equations, we obtain:
  \begin{lstlisting}[style=Prolog-pygsty,label=lst:exeq4]
  U1(-x, -x) :- R(-x).
  U1(-x, -y) :- U1(-x, @a/-y/@b).
  T(-x, -y, @a, @b) :- U1(-x, @a/-y/@b), @a=@b.
  S1(-x) :- U1(-x, !).
  U(-x, -x) :- R(-x).
  U(-x, -y) :- U(-x, @a/-y/@b), + T(-x, -y, @a, @b).
  S(-x) :- U(-x, !).
  \end{lstlisting}
\end{example}

From the above we conclude that $\E$ is redundant in the presence
of $\I$ and $\A$.  Since we already know that arity is redundant,
we obtain:

\begin{theorem}\label{thm:equatons}
  $\E$ is redundant in the presence of $\I$.
\end{theorem}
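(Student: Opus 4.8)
The plan is to obtain the statement by composing two facts already in hand: that $\E$ is redundant in the presence of $\I$ and $\A$, and that $\A$ itself is redundant (Theorem~\ref{thm:arity}). First I would assemble the intermediate claim that $\E$ is redundant in the presence of $\I$ and $\A$, i.e.\ that $F \subs F - \{\E\}$ for every fragment $F$ with $\{\I,\A\} \subseteq F$. For this I would split the equations occurring in a program into positive and negated ones: positive equations are removed by the auxiliary-predicate trick shown in the example above, which needs only $\I$ and $\A$ and introduces no new feature, while negated equations are removed by the construction of the previous Lemma. The point to check here is that the Lemma's construction introduces a subgoal $\lnot T(\dots)$ and hence uses $\N$; but this is harmless, because a negated equation is itself a negated atom, so any program containing one already has the $\N$-feature, and a fragment $F$ that does not contain $\N$ cannot contain any negated equation in the first place, leaving only positive equations to eliminate. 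Either way the resulting program stays inside $F - \{\E\}$.

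Second, to pass from ``in the presence of $\I$ and $\A$'' to ``in the presence of $\I$'', I would discharge the dependence on $\A$ using Theorem~\ref{thm:arity}, via the elementary facts that $\subs$ is transitive and that $F_1 \subseteq F_2$ implies $F_1 \subs F_2$. Concretely, given any fragment $F$ with $\I \in F$, I would set $G = F \cup \{\A\}$, so that $\{\I,\A\} \subseteq G$. The intermediate claim gives $G \subs G - \{\E\}$, and $F \subseteq G$ gives $F \subs G$, hence $F \subs G - \{\E\}$ by transitivity. Now $G - \{\E\} = (F - \{\E\}) \cup \{\A\}$, and applying Theorem~\ref{thm:arity} to this fragment, together with inclusion-monotonicity, yields $(F - \{\E\}) \cup \{\A\} \subs (F - \{\E\}) - \{\A\} \subs F - \{\E\}$. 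Chaining these by transitivity gives $F \subs F - \{\E\}$, as required.

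I expect the only real subtlety, rather than a genuine obstacle, to be the bookkeeping in the first step: one must be careful that eliminating equations does not silently introduce a feature that the ambient fragment $F$ forbids. The resolution, as noted, is that the only extra feature the constructions could demand is $\N$, and $\N$ is available precisely when it might be needed, namely whenever a negated equation is present. The second step is purely formal, relying only on transitivity and inclusion-monotonicity of the subsumption preorder together with the set identity $(F \cup \{\A\}) - \{\E\} = (F - \{\E\}) \cup \{\A\}$.
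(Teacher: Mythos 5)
Your proposal is correct and follows essentially the same route as the paper: positive equations are eliminated via the auxiliary-predicate trick, negated equations via the stratified construction of the preceding lemma (with the same observation that its use of $\N$ is harmless since any negated equation already forces the $\N$-feature), and the dependence on $\A$ is then discharged by the redundancy of arity. Your second step merely spells out the transitivity bookkeeping that the paper leaves implicit.
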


\subsection{Packing}

In this section we show that packing is redundant.
The main task will be to eliminate packing from equations
in nonrecursive programs.  We will follow the following strategy
to achieve this task:

\begin{enumerate}
  \item In Section~\ref{secpure} we show how to
  eliminate all variables that can hold values with
  packing. We will call such variables \emph{impure}.
  The elimination is achieved by ``solving''
equations involving impure variables.
\item
Thereto, we will extend a known method for solving word
  equations.  We begin by recalling this method in Section~\ref{secword}.
In Section~\ref{secwp} we
  present the extension to path expressions.
  \item When all variables are pure,
equations involving packing can only be satisfiable if the two sides
have a similar ``shape'', called \emph{packing structure}.
We formalize this in Section~\ref{secstruct}.
\end{enumerate}
The main result concerning packing is then proven
in Section~\ref{secpmain}.

\subsubsection{Solving equations} \label{secword}

Consider an equation $e_1 = e_2$ and let $X$ be the set of
variables occurring in the equation.  A valuation $\nu$ on $X$ is
called a \emph{solution} if $\nu(e_1)$ and $\nu(e_2)$ are the
same path.  The set of solutions is typically infinite, so we
would like a way to represent this set in a finite manner.

Thereto one can use variable substitutions: partial functions
that map variables to path expressions over $X$.  Such a variable
substitution $\rho$ is called a \emph{symbolic solution} to the
equation if $\rho(e_1)$ and $\rho(e_2)$ are the same path
expression.  Every symbolic solution $\rho$ represents a set of
solutions $$ [\rho] := \{\nu \circ \rho \mid \text{$\nu$ a
valuation on $X$}\}. $$ A set $R$ of symbolic solutions is
called \emph{complete} if $\bigcup \{[\rho] \mid \rho \in R\}$
yields the complete set of solutions to the equation.

The classical setting of \emph{word equations}
\cite{abdulrab1989solving} can be seen as a
special case of the situation just described.  A word equation
corresponds to the case where $e_1$ and $e_2$ contain no
packing, and no atomic variables, i.e., all variables are path
variables.

Plotkin's ``pig-pug'' procedure for associative unification
\cite{plotkin1972building} generates a complete set of symbolic
solutions to any word equation.  However, not every word equation
admits a \emph{finite} complete set of symbolic solutions; a
simple example is our familiar equation $\$x \cdot a = a \cdot
\$x$.  Hence, in general, the procedure may not
terminate.\footnote{The reader may be interested to know that
other
means of finite representation (different from a finite set of
substitutions) have been discovered, that work for arbitrary word
equations \cite{plandowski_fswep}.} Nevertheless, pig-pug is
guaranteed to terminate on ``one-sided nonlinear'' equations
\cite{duran2018associative}.  These are word equations where all
variables that occur more than once in the equation, only occur
in one side of the equation.

We briefly review the pig-pug procedure.
The procedure constructs a search tree whose nodes are labeled with word
equations; the root is labeled with the original word
equation. For each node we generate children according
to a rewriting relation, $\Rightarrow$, on word equations.  Specifically, we
have the following rewrite rules:
\begin{enumerate}
  \item Cancellation rule:
$(x \conc w_1 \weq x \conc w_2) \Rightarrow (w_1 \weq w_2)$,
for $x \in \dom \cup X$.

  \item Main rules: each one of the rules is associated with a substitution,
  $\rho$.  Let $\bf x$ and $\bf y$ be distinct variables and let
  $a$ be an atomic value.
  \begin{enumerate}
    \item $({\bf x} \conc w_1 \weq {\bf y} \conc w_2) \Rightarrow
    ({\bf x} \conc \rho(w_1) \weq \rho(w_2))$ with
    $\rho({\bf x}) = {\bf y \conc x}$

    \item $({\bf x} \conc w_1 \weq {\bf y} \conc w_2) \Rightarrow
    (\rho(w_1) \weq \rho(w_2))$ with $\rho({\bf x}) = {\bf y}$

    \item $({\bf x} \conc w_1 \weq {\bf y} \conc w_2) \Rightarrow
    (\rho(w_1) \weq {\bf y} \conc \rho(w_2))$ with
    $\rho({\bf y}) = {\bf x \conc y}$

    \item $({\bf x} \conc w_1 \weq a \conc w_2) \Rightarrow
    ({\bf x} \conc \rho(w_1) \weq \rho(w_2))$ with
    $\rho({\bf x}) = a {\bf \conc x}$

    \item $({\bf x} \conc w_1 \weq a \conc w_2) \Rightarrow
    (\rho(w_1) \weq \rho(w_2))$ with $\rho({\bf x}) = a$

    \item $(a \conc w_1 \weq {\bf y} \conc w_2) \Rightarrow
    (\rho(w_1) \weq {\bf y} \conc \rho(w_2))$ with
    $\rho({\bf y}) = a {\bf \conc y}$

    \item $(a \conc w_1 \weq {\bf y} \conc w_2) \Rightarrow
    (\rho(w_1) \weq \rho(w_2))$ with $\rho({\bf y}) = a$
  \end{enumerate}
\end{enumerate}

When no rule is applicable to an equation, we have reached a leaf node in
the search tree.  There are three possible cases for such a leaf equation:
\begin{enumerate}
  \item $(\emp \weq \emp)$.
  \item $(a \conc w_1) \weq (b \conc w_2)$, for atomic
  values $a \neq b$.
  \item $(\emp \weq w)$ or $(w \weq \emp)$, for nonempty $w$.
\end{enumerate}
The first case is successful, while the other two are not.  Each
path from the root to a leaf node of the form $(\emp \weq \emp)$
yields a symbolic solution, formed by composing the substitutions
given by the rewritings along the path.  When starting from a
one-side nonlinear equation, the tree is finite and we obtain a complete
finite set of symbolic solutions.\footnote{It is
standard in the literature on word equations to consider
only solutions that map variables to nonempty words.  The above
procedure is only complete under that assumption.  However,
allowing the empty word can be easily accommodated.
For any equation $\mathit{eq}$ on a set of variables $X$, and any
subset $Y$ of $X$, let $\mathit{eq}_Y$ be the equation obtained
from $\mathit{eq}$ by replacing the variables in $Y$ by the empty
word.  Let $R_Y$ be a complete set of symbolic solutions for
$\mathit{eq}_Y$ where we extend each substitution to $X$ by
mapping every variable from $Y$ to the empty word.  Then the
union of the $R_Y$ is a complete set of symbolic solutions for
$\mathit{eq}$, allowing the empty word.  If $\mathit{eq}$ is
one-sided nonlinear, then $\mathit{eq}_Y$ is too.  This remark
equally applies to the extension to path expressions presented in
Section~\ref{secwp}.}


\begin{figure*}
  \centering
  \resizebox{0.85\textwidth}{!}{%
  \rotatebox{270}{
  \begin{tikzpicture}
\node at (-2,1) (Root) [align=center]
{$\pvar x \conc \pack{\avar y \conc \pvar z} \conc \avar w
$\\=\\$\pvar u \conc \pvar v \conc \pvar u$};

\node at (2,-3) (n4) [align=center]
{$\pack{\avar y \conc \pvar z} \conc \avar w$
\\=\\$\pvar u \conc \pvar v \conc \pvar x \conc \pvar u$};

\node at (2,-7) (n9) [align=center]{$\avar w$
\\=\\$\pvar v \conc \pvar x \conc \pack{\avar y \conc \pvar z}$};

{\failN{0}{-10}{n11}{$\emp$
\\=\\$\pvar x \conc \pack{\avar y \conc \pvar z}$}}

{\failN{2}{-12}{n11n}{$\emp$
\\=\\$\pvar v \conc \pvar x \conc \pack{\avar y \conc \pvar z}$}}

\node at (7,-7) (n10) [align=center]{$\avar w$
\\=\\
$\pvar u \conc \pvar v \conc \pvar x \conc \pack{\avar y \conc \pvar z}
\conc \pvar u$};

{\failN{5}{-10}{n12}{$\emp$
\\=\\$\pvar v \conc \pvar x \conc \pack{\avar y \conc \pvar z} \conc
\avar w$}}

{\failN{7}{-12}{n12n}{$\emp = \pvar u \conc \pvar v \conc \pvar x \conc
\pack{\avar y \conc \pvar z} \conc \avar w \conc \pvar u$}}

\node at (-7,-3) (nn2) [align=center]
{$\pvar x \conc \pack{\avar y \conc \pvar z} \conc \avar w $
\\=\\$\pvar v \conc \pvar u$};

\node at (-2,-3) (nn10) [align=center]
{$\pack{\avar y \conc \pvar z} \conc \avar w$
\\=\\$\pvar v \conc \pvar u$};

\node at (-2,-8) (nn16) [align=center]
{$\avar w = \pvar v \conc \pvar u$};

\failN{-2}{-12}{nn20}{$\emp$=$\pvar v \conc \pvar u$}

\node at (-12,-3) (nn12) [align=center]
{$\pvar x \conc \pack{\avar y \conc \pvar z} \conc \avar w$
\\=\\$\pvar u$};

\failN{-14}{-6}{nn5}{
$\pvar x \conc \pack{\avar y \conc \pvar z} \conc \avar w$
\\=\\$\emp$}

\failN{-12}{-8}{nn6}{
$\pack{\avar y \conc \pvar z} \conc \avar w$
\\=\\$\emp$}

\node at (-9,-6) (nn7) [align=center]
{$\pack{\avar y \conc \pvar z} \conc \avar w$
\\=\\$\pvar u$};

\failN{-9}{-10}{nn8}{
$\avar w$=$\emp$}

\node at (-4,-6) (nn9) [align=center]
{$\avar w$=$\pvar u$};

\node[draw, ultra thick, circle] at (-6,-8) (nn13) [align=center]
{$\emp$=$\emp$};


\failN{-4}{-10}{nn19}{$\emp$=$\pvar u$}


\path[->, ultra thick] (Root) edge node[sloped,above]
{$\pvar x \mapsto \pvar u \conc \pvar x$} (nn2);

\path[->, ultra thick] (Root) edge node[sloped,above]
{$\pvar x \mapsto \pvar u$} (nn10);

\path[->] (Root) edge node[sloped,above]
{$\pvar u \mapsto \pvar x \conc \pvar u$} (n4);

\path[->] (n4) edge node[sloped,above]
{$\pvar u \mapsto \pack{\avar y \conc \pvar z}$} (n9);

\path[->] (n9) edge node[sloped,above]
{$\pvar v \mapsto \avar w$} (n11);

\path[->] (n9) edge node[sloped,above]
{$\pvar v \mapsto \avar w \conc \pvar v$} (n11n);

\path[->] (n10) edge node[sloped,above]
{$\pvar u \mapsto \avar w$} (n12);

\path[->] (n10) edge node[sloped,above]
{$\pvar u \mapsto \avar w \conc \pvar u$} (n12n);

\path[->] (n4) edge node[sloped,above]
{$\pvar u \mapsto \pack{\avar y \conc \pvar z} \conc \avar u$} (n10);

\path[->, ultra thick] (nn2) edge node[sloped,below]
{$\pvar x \mapsto \pvar v \conc \pvar x$} (nn12);

\path[->, ultra thick] (nn2) edge node[sloped,above]
{$\pvar x \mapsto \pvar v$} (nn7);

\path[->, ultra thick] (nn2) edge node[sloped,above]
{$\pvar v \mapsto \pvar x \conc \pvar v$} (nn10);

\path[->] (nn12) edge node[sloped,above]
{$\pvar x \mapsto \pvar u \conc \pvar x$} (nn5);

\path[->] (nn12) edge node[sloped,above]
{$\pvar x \mapsto \pvar u$} (nn6);

\path[->, ultra thick] (nn12) edge node[sloped,above]
{$\pvar u \mapsto \pvar x \conc \pvar u$} (nn7);

\path[->] (nn7) edge node[sloped,above]
{$\pvar u \mapsto \pack{\avar y \conc \pvar z}$} (nn8);

\path[->, ultra thick] (nn7) edge node[sloped,below]
{$\pvar u \mapsto \pack{\avar y \conc \pvar z} \conc \pvar u$} (nn9);

\path[->, ultra thick] (nn9) edge node[sloped,below]
{$\pvar u \mapsto \avar w$} (nn13);

\path[->] (nn9) edge node[sloped,above]
{$\pvar u \mapsto \avar w \conc \pvar u$} (nn19);

\path[->, ultra thick] (nn10) edge node[sloped,below]
{$\pvar v \mapsto \pack{\avar y \conc \pvar z}$} (nn9);

\path[->] (nn10) edge node[sloped,above]
{$\pvar v \mapsto \pack{\avar y \conc \pvar z} \conc \pvar v$} (nn16);

\path[->] (nn16) edge node[sloped,below] {$\pvar v \mapsto \avar w$} (nn19);

\path[->] (nn16) edge node[sloped,above]
{$\pvar v \mapsto \avar w \conc \pvar v$} (nn20);

\end{tikzpicture}
 } }%
  \caption{Associative unification on an equation on path
  expressions.  Bold edges indicate the successful branches.}
  \label{fig:packingex1}
\end{figure*}

\subsubsection{Extension to path expressions} \label{secwp}

Our equations differ from word equations in that path expressions
can involve packing as well as atomic variables.  To this end, we
extend the rewriting system as follows.

\begin{enumerate}[label=(\alph*)]
  \setcounter{enumi}{7}
  \item Given an equation of the form $(\avar x \conc w_1 \weq \avar y
  \conc w_2)$, the only possibility is for $\avar x$ and $\avar y$ to be
  the same.  Thus we add the rule $(\avar x \conc w_1 \weq \avar y \conc w_2)
  \Rightarrow (\rho(w_1) \weq \rho(w_2))$ with $\rho(\avar x) = \avar y$.

  \item An equation of the form $(\avar x \conc w_1 \weq \pvar y
  \conc w_2)$ is not very different from the case where we have
  a constant instead of $\avar x$.  Thus, we add two
  rules similar to rules (f) and (g):
    \begin{itemize}
      \item $(\avar x \conc w_1 \weq \pvar y \conc w_2) \Rightarrow
      (\rho(w_1) \weq \pvar y \conc \rho(w_2))$ with $\rho(\pvar y) =
      \avar x \conc \pvar y$

      \item $(\avar x \conc w_1 \weq \pvar y \conc w_2) \Rightarrow
      (\rho(w_1) \weq \rho(w_2))$ with $\rho(\pvar y) = \avar x$
    \end{itemize}

  \item Analogously, we add rules similar to rules (d) and (e):
    \begin{itemize}
      \item $(\pvar x \conc w_1 \weq \avar y \conc w_2) \Rightarrow
      (\pvar x \conc \rho(w_1) \weq \rho(w_2))$ with $\rho(\pvar x) =
      \avar y \conc \pvar x$

      \item $(\pvar x \conc w_1 \weq \avar y \conc w_2) \Rightarrow
      (\rho(w_1) \weq \rho(w_2))$ with $\rho(\pvar x) = \avar y$
    \end{itemize}

  \item Given an equation of the form $(\pack{w_1} \conc w_2 \weq
  \pack{w_3} \conc w_4)$, we work inductively
  and solve the equation $w_1 \weq w_3$ first.  Assuming we can
  find a finite complete set $R$ of symbolic solutions for this
  equation, we then add the rules
  $(\pack{w_1} \conc w_2 \weq \pack{w_3}
  \conc w_4) \Rightarrow (\rho(w_2) \weq \rho(w_4))$ for $\rho
  \in R$.

  \item An equation of the form
  $(\pack{w_1} \conc w_2 \weq \pvar y \conc w_3)$
is again not very different from the case where we have
  a constant instead of $\pack{w_1}$.  Thus, we add two
  rules similar to rules (f) and (g):
    \begin{itemize}
      \item $(\pack{w_1} \conc w_2 \weq \pvar y \conc w_3) \Rightarrow
      (\rho(w_2) \weq \pvar y \conc \rho(w_3))$ with $\rho(\pvar y) =
      \pack{w_1} \conc \pvar y$

      \item $(\pack{w_1} \conc w_2 \weq \pvar y \conc w_3) \Rightarrow
      (\rho(w_2) \weq \rho(w_3))$ with $\rho(\pvar y) = \pack{w_1}$
    \end{itemize}

  \item Analogously, we again add rules similar to rules (d) and
  (e):
    \begin{itemize}
      \item $(\pvar x \conc w_1 \weq \pack{w_2} \conc w_3) \Rightarrow
      (\pvar x \conc \rho(w_1) \weq \rho(w_3))$ with $\rho(\pvar x) =
      \pack{w_2} \conc \pvar x$

      \item $(\pvar x \conc w_1 \weq \pack{w_2} \conc w_3) \Rightarrow
      (\rho(w_1) \weq \rho(w_3))$ with $\rho(\pvar x) = \pack{w_2}$
    \end{itemize}
\end{enumerate}

Furthermore, we now have extra non-successful cases for leaf
equations, namely all equations of the form
$(\avar x \conc w_1 \weq \pack{w_2} \conc w_3)$ or
$(\pack{w_2} \conc w_3 \weq \avar y \conc w_1)$.

Extending known arguments, one can see that on any one-sided
nonlinear equation, our extended rewriting system terminates and yields a
finite complete set of symbolic solutions.

\begin{example}
  Figure~\ref{fig:packingex1} shows a DAG representation of the
  search tree for the
  equation $\pvar x \conc \pack{\avar y \conc \pvar z}
  \conc \avar w = \pvar u \conc \pvar v \conc \pvar u$.
There are four successful branches, so the following
substitutions comprise a complete set of symbolic solutions:
  \begin{align*}
& \{\pvar x \mapsto \avar w,\,
    \pvar u \mapsto \avar w,\,
    \pvar v \mapsto \pack{\avar y \conc \pvar z}\}
\\
& \{\pvar x \mapsto \avar w \conc \pvar x,\,
    \pvar v \mapsto \pvar x \conc \pack{\avar y \conc \pvar z},\,
    \pvar u \mapsto \avar w\}
\\
& \{\pvar x \mapsto
    \pack{\avar y \conc \pvar z} \conc \avar w \conc \pvar v,\,
    \pvar u \mapsto \pack{\avar y \conc \pvar z} \conc \avar w\}
\\
& \{\pvar x \mapsto \pvar x \conc \pack{\avar y \conc \pvar z}
    \conc \avar w \conc \pvar v \conc \pvar x,\,
    \pvar u \mapsto \pvar x \conc \pack{\avar y \conc \pvar z}
    \conc \avar w\}
  \end{align*}
\end{example}

\subsubsection{Pure variables and pure equations} \label{secpure}

We introduce a syntactic ``purity check'' on variables, that
guarantees that they can only take values that do not contain
packed values.  Since later we will work stratum per stratum, it
is sufficient in what follows to focus on semipositive,
nonrecursive programs with only one IDB relation name.

Consider a rule in such a program.  When a variable appears in
some positive EDB predicate, we call the variable a
\emph{source variable} of the rule.  Now we inductively define
a variable in the rule to be \emph{pure} if
\begin{enumerate}
  \item it is a source variable (since we focus on flat input instances); or
  \item it appears in one side of a positive equation, such that
  \begin{itemize}
    \item all the variables in the other side of the equation are pure, and
    \item the other side of the equation has no packing.
  \end{itemize}
\end{enumerate}

By leveraging associative unification, we are going to show that
we can always eliminate impure variables.  The method is based on
a division of the positive equations of a rule into three categories:
\begin{description}
  \item[Pure equations] involve only pure variables.
  \item[Half-pure equations] have all variables in
  one side pure, and at least one of the variables in the other side is
  impure.
  \item[Fully impure equations] have impure
  variables in both sides.
\end{description}

\begin{example}
The three equations in the rule $$\ru{S(\pvar x)} {R(\pvar x, \pvar y)
    \land \pack{\pvar x} = \pack{\pvar y} \land a \conc \pvar x = \pv z \land
    \pvar y = \pack{\pvar u}}$$ are pure.
The two equations in the rule $$\ru{S(\pvar x)} {R(\pvar x, \pvar y) \land
    \pack{\pvar y} = \pvar z \land \pack{\pvar x} = \pack{\pvar z}}$$ are
    half-pure.
The equation $\pack{\pvar t} = \pack{\pvar z}$ in the rule
    $$\ru{S(\pvar x)} {R(\pvar x, \pvar y) \land \pack{\pvar t} =
    \pack{\pvar z} \land \pvar z = \pack{\pvar y} \land
    \pvar t = \pack{\pvar x}}$$ is fully impure.
\end{example}

It is instructive to compare the notion of pure variable with that of limited
variable, used to define the notion of safe rule.  Indeed, the set of limited
variables can be equivalently defined as follows, where we only change the base
case of the induction to immediately include all pure variables:
\begin{itemize}
  \item Every pure variable is limited; and
  \item If all the variables occurring in one side of the sides of an
  equation in the rule are limited, then all the variables occurring in
  the other side are also limited.
\end{itemize}

Therefore, if there is at least one impure variable in a safe
rule, then there must be at least one half-pure equation in the
rule.  In other words, it is not possible for a rule to have
fully impure equations without having half-pure ones.

\begin{lemma}\label{lem:non-pure-eq}
  Let $r$ be a rule in a semi-positive, nonrecursive program
  $\pr P$ with only one IDB relation name.
Then there exists
  a finite set of rules, equivalent to $r$ on flat instances,
  in which all positive equations are pure.
\end{lemma}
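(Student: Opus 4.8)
The plan is to prove the statement by induction on the number of \emph{impure} variables of $r$, at each step picking a \emph{half-pure} equation, ``solving'' it with the extended associative unification of Section~\ref{secwp}, and substituting the resulting symbolic solutions into $r$. For the base case, suppose $r$ has no impure variables at all. Then every variable of $r$ is pure, so in particular every positive equation of $r$ contains only pure variables and is therefore already pure, and there is nothing to do. So assume $r$ has at least one impure variable. Since $r$ is safe, the remark preceding the lemma guarantees that $r$ contains a half-pure equation $e_1 = e_2$; orient it so that $e_1$ is the pure side (all its variables pure). First observe that $e_1$ must contain packing: if it did not, then, all variables of $e_1$ being pure and $e_1$ being packing-free, the definition of purity would make every variable of $e_2$ pure as well, contradicting that $e_2$ has an impure variable.

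Next I make the equation one-sided nonlinear so that the rewriting system of Section~\ref{secwp} is guaranteed to terminate. Every impure variable occurs only in $e_2$, since $e_1$ has none; hence the sole possible obstruction to one-sided nonlinearity is a pure variable occurring on both sides. I therefore replace each occurrence of a pure variable $\pvar w$ on the impure side by a fresh variable $\pvar w'$, obtaining a side $e_2'$, and add to the body the equation $\pvar w' = \pvar w$ for each such pair. These added equations are pure, because one of their sides is a single pure variable without packing, which also makes the fresh variable pure; so they introduce no new impurity. After this renaming no variable occurs on both sides of $e_1 = e_2'$, so the equation is one-sided nonlinear, and by the result recalled in Section~\ref{secwp} the extended rewriting system halts and returns a \emph{finite complete} set $R$ of symbolic solutions.

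For each $\rho \in R$ I produce the rule obtained from the renamed version of $r$ by applying $\rho$ throughout (head, predicates, positive and negated equations alike) and deleting the now-trivial equation $\rho(e_1) = \rho(e_2')$, which is a syntactic identity since $\rho$ is a unifier. Because $R$ is complete and substitution commutes with valuations, the resulting finite set of rules computes, on flat instances, exactly the facts that $r$ does; and since applying a substitution preserves safety, semipositivity, nonrecursiveness, and the single-IDB property, each new rule again satisfies the hypotheses of the lemma, so the construction can be iterated.

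The remaining point is that each iteration makes genuine progress, and this is the crux of the argument. The claim is that \emph{every} $\rho \in R$ maps each impure variable of $e_2'$ to a path expression all of whose variables are pure, and maps pure variables only to expressions over pure variables; consequently each rewritten rule has strictly fewer impure variables than $r$, and the induction terminates, yielding a finite set of rules all of whose positive equations are pure. The justification is essentially that, $\rho$ being a syntactic unifier, the common expression $\rho(e_1)=\rho(e_2')$ already contains no impure variable coming from $e_1$ (which has none), so provided $\rho$ keeps pure variables pure, $\rho(e_2')$ contains no impure variable either, forcing every impure variable of $e_2'$ into the domain of $\rho$ with a pure image. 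The heart of the matter is thus the invariant that the rewriting never has to express a pure variable through an impure one: intuitively, on a flat instance the value of $e_1$ is determined by its pure variables, and because $e_1$ carries packing an impure variable can only be resolved by absorbing pure-side material, so along every successful branch the substitutions assign to a leading variable a prefix taken from the opposite side, while the branches that would rewrite a pure variable in terms of an impure one run into an unmatchable packed value and fail. Establishing this invariant carefully for the full system, including the packing rules, is the main obstacle; once it is in hand, the strict decrease of the number of impure variables, and hence the termination and finiteness of the whole construction, are immediate.
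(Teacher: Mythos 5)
Your overall strategy (isolate a half-pure equation, solve it by the extended associative unification, substitute the symbolic solutions, induct) is the paper's strategy, but two steps diverge from it in ways that break the argument. The first is the reduction to one-sided nonlinearity. You rename only the pure variables that occur on the impure side, and conclude that the resulting equation $e_1=e_2'$ is one-sided nonlinear because no variable is shared between the sides. That is not enough: the termination guarantee requires all the nonlinearity to be confined to a single side, i.e.\ one side must be linear. After your renaming, the pure side $e_1$ may still contain repeated variables while the impure side contains its own repeated (impure) variables. For a concrete failure, take the safe rule $S(\pvar x) \leftarrow R(\pvar x) \land \pvar x \conc \pack{a} \conc \pvar x = \pvar z \conc \pvar z$: your procedure leaves the equation untouched, and the rewriting system cycles on it (e.g.\ $\pvar x \conc \pack{a} \conc \pvar z \conc \pvar x = \pvar z$ reproduces itself after two more steps), so no finite complete set of symbolic solutions is produced. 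The paper avoids this by replacing \emph{every occurrence} of \emph{every} variable of the pure side $e_1$ by a distinct fresh variable $v_i$, adding the (pure) linking equations $u_i = v_i$; the modified pure side is then literally linear and variable-disjoint from $e_2$, which is the hypothesis the termination result actually needs.

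The second divergence is the progress claim, which you yourself flag as the unproved ``crux''. The invariant you want --- that \emph{every} $\rho$ in the complete set maps impure variables to pure, packing-free material and keeps pure variables pure --- is false: the complete set returned by the procedure contains substitutions that assign packed subexpressions to pure variables (already $\pack{a} \conc \pvar x = \pvar z$ has a branch with $\pvar z \mapsto \pack{a}\conc\pvar z$, $\pvar x \mapsto \pvar z$). The paper does not prove any such global invariant; instead it calls $\rho$ \emph{valid} when it maps pure variables to expressions without packing, observes that for nonvalid $\rho$ the resulting rule is unsatisfiable on flat instances and may simply be discarded, and checks that for valid $\rho$ the new equations $\rho(u_i)=\rho(v_i)$ are pure, so the number of half-pure equations (its induction measure, rather than your count of impure variables) strictly decreases. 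Filtering out the bad substitutions, rather than proving they do not occur, is the missing idea that closes your gap.
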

\begin{proof}
By induction on the number of half-pure equations.
Let $r: H \leftarrow B \land e_1=e_2$, where $e_1=e_2$ is half-pure with
$e_1$ the pure side and $e_2$ the impure side.
Let $u_1, \ldots, u_n$ be the list of all \emph{occurrences} of variables
  in $e_1$. Let $v_1, \ldots, v_n$ be $n$ fresh variables, and let
  $e_1'$ be $e_1$ with each $u_i$ replaced by $v_i$.
Now replace $e_1=e_2$ by
  the following conjunction of $n+1$ equations:
  $$ u_1=v_1 \land \ldots \land u_n=v_n \land e_1'=e_2 $$
Here, abusing notation, we use the same notation
$u_i$ for the variable that occurs at $u_i$.

Denote the result of this replacement by $r'$.
The equation $e_1'=e_2$ is one-sided
nonlinear; by Section~\ref{secwp}, there exists a finite complete
set $R$ of solutions.  If we let $r''$ be $r'$ without
$e_1'=e_2$, then clearly $r$ is equivalent to the set of rules
$\{\rho(r'') \mid \rho \in R\}$.  However, some of these rules
may not have strictly less half-pure equations than $r$, which is
necessary for the induction to work.

We can solve this problem as follows.  Call $\rho \in R$
\emph{valid} if it maps variables that are pure in $r''$ to
expressions without packing.  Since all $u_i$ and $v_i$ are pure
in $r''$, the equations $\rho(u_i)=\rho(v_i)$ in $\rho(r'')$ are
all pure, so $\rho(r'')$ does have strictly less half-pure
equations than $r$.

Fortunately, we can restrict attention to the valid $\rho \in R$,
so the induction goes through.  Indeed, following the definition of
pure variable, one can readily verify that for nonvalid $\rho$, the rule
$\rho(r'')$ is unsatisfiable on flat instances.
\end{proof}

\subsubsection{Packing structures} \label{secstruct}

By Lemma~\ref{lem:non-pure-eq},
all positive equations can be taken to be pure.  We now
reduce this further so that all positive equations are free of
packing.  Thereto we introduce
the \emph{packing structure} of a path expression $e$, denoted by
$\ps{e}$, and defined as follows:
\begin{itemize}
  \item $\ps{\emp} = *$.
  \item $\ps{a} =*$, with $a$ a variable or an atomic value.
  \item $\ps{\pack{e}} = * \conc \pack{\ps{e}} \conc *$.
  \item $\ps{e_1 \conc e_2}$ equals $\ps{e_1} \conc \ps{e_2}$, in
  which we replace any consecutive sequence of
stars by a single star.
\end{itemize}

Assume $\ps e$ has $n$ stars.  Then $e$ can be constructed from
$\ps e$ by replacing each star by a unique (possibly empty)
subexpression of $e$.  We call these subexpressions the
components of $e$. Crucially, they do not use packing.

If $e$ does not use packing, $\ps e$ is simply $*$.  If $e$
begins or ends with
packing, or if some packing in $e$ begins or ends with another packing, then
some components will be empty.

\begin{example}
Let $e = \avar a \conc \pack{\pack{\pvar x \conc \pvar y} \conc
\pvar z} \conc \pack{\emp}$.  Then $\ps{e} = * \conc \pack{ *
\conc \pack{*} \conc *} \conc * \conc \pack{*} \conc *$.  The
seven components of $e$ are $\avar{a}$, $\emp$, $\pvar{x} \conc
\pvar{y}$, $\pvar{z}$, $\emp$, $\emp$, and $\emp$.
\end{example}

A pure equation $e_1 = e_2$ can only be satisfiable on flat
instances if $e_1$ and $e_2$ have the same packing structure.
Suppose there are $n$ stars in this packing structure.
Then, the equation can be replaced by the conjunction of $n$
equations, where we equate the corresponding components of $e_1$
and $e_2$.  These equations are still pure, and free of packing.

Moreover, when all positive equations are pure, then all
variables in the rule are pure, since the rule is safe.
Now a negated equation $e_1 \neq e_2$ over pure variables is
equivalent to the disjunction of the nonequalities between the
corresponding components of $e_1$ and $e_2$.  Then the rule can
be replaced by a set of rules, one for each disjunct, and the
component nonequalities are free of packing.  We can repeat this
for all negated equations.

We have arrived at the following:

\begin{lemma} \label{lem:pure-eq}
Let $r$ be a rule in a semi-positive, nonrecursive program
  $\pr P$ with only one IDB relation name.
Then there exists
  a finite set of rules, equivalent to $r$ on flat instances,
  in which all variables are pure, and all
  equations (positive or negated) are free of packing.
\end{lemma}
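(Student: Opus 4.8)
The plan is to start from Lemma~\ref{lem:non-pure-eq}, which already lets me assume that every positive equation in $r$ is pure, and then to strip packing out of the positive equations and finally out of the negated equations, in that order. The whole reduction rests on a single observation about flat instances: since all variables in play are pure, every appropriate valuation $\nu$ maps each variable to a packing-free path, so the packing structure of $\nu(e)$ coincides with $\ps e$ for every side $e$ of an equation.

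First I would attack the positive equations one at a time using the packing-structure machinery. For a pure equation $e_1 = e_2$, the observation above shows that if $\ps{e_1} \neq \ps{e_2}$ then the equation is unsatisfiable on flat instances, so the rule can be discarded; whereas if $\ps{e_1} = \ps{e_2}$ (say with $n$ stars), then $\nu(e_1) = \nu(e_2)$ holds exactly when the $n$ corresponding components agree. I would therefore replace $e_1 = e_2$ by the conjunction of the $n$ component equalities. By construction these components contain no packing, and since each component's variables are among the original pure variables, the new equations remain pure. Iterating over all positive equations yields a rule whose positive equations are all packing-free.

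Next I would record the bridge observation that once every positive equation is pure, the inductive characterization of limited variables collapses: its base case is exactly the pure variables, and the equation-closure step can never leave the pure set (both sides of every positive equation are pure). Hence safety of $r$ forces every variable of $r$ to be pure. This is what licenses treating the negated equations with the same component analysis, since now $\ps{\nu(e)} = \ps e$ for the sides of negated equations too.

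Finally, for each negated equation $e_1 \neq e_2$, if $\ps{e_1} \neq \ps{e_2}$ the inequality holds under every valuation and the literal can simply be deleted; if $\ps{e_1} = \ps{e_2}$ with $n$ components, then $\nu(e_1) \neq \nu(e_2)$ is equivalent to the disjunction of the $n$ component nonequalities, each of which is packing-free. As a rule body is a conjunction, I would eliminate this disjunction by splitting the rule into $n$ rules, one per disjunct, and repeat for every negated equation. The main obstacle is precisely this last step: unlike a positive equation, a negated packing equation does not translate into a single packing-free literal, so the reduction must branch, and I must verify that the resulting finite set of rules is genuinely equivalent to $r$ on flat instances. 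That verification again reduces to the faithfulness of the disjunction-of-components reformulation of $\neq$, which holds because purity together with flatness pins the packing structure of each side to $\ps{e_1}$ and $\ps{e_2}$.
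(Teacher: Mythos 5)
Your proposal is correct and follows essentially the same route as the paper: invoke Lemma~\ref{lem:non-pure-eq} to make all positive equations pure, split each pure positive equation into component equations via packing structures, use safety to conclude all variables are then pure, and eliminate packing from negated equations by splitting the rule into one rule per component nonequality. The only additions are the explicit handling of mismatched packing structures (discarding the rule for a positive equation, deleting the literal for a negated one), which the paper leaves implicit.
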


\subsubsection{Redundancy of packing} \label{secpmain}

We are now ready for the following result.  The proof further
leverages packing structures.

\begin{lemma}\label{lem:packing1}
Packing is redundant in the absence of recursion.
\end{lemma}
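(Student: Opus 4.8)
The plan is to remove packing from a nonrecursive program $\pr{P}$ that computes a flat query $Q$, while introducing no feature outside $F$. Since $\R \notin F$, the dependency graph of $\pr{P}$ is a DAG, so I can order its IDB relations $T_1, \ldots, T_m$ topologically and process them one at a time, always treating the already-processed (and by then packing-free) relations, together with the EDB, as input. This reduces the situation at each step to the single-IDB, semipositive, nonrecursive setting of Lemmas~\ref{lem:non-pure-eq} and~\ref{lem:pure-eq}. First I would apply Lemma~\ref{lem:pure-eq} to the rules defining the current relation, so that all of its variables become pure and all its equations (positive and negated) become free of packing. After this step, packing can occur only inside the arguments of predicates, always as the constructor $\pack{\cdot}$ applied to packing-free subexpressions over pure variables.

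Next I would eliminate this remaining packing through the packing-structure machinery of Section~\ref{secstruct}. The key observation is that, because $\pr{P}$ is nonrecursive, the nesting depth of packed values it can build is bounded, so each IDB relation $T$ admits only a finite, computable set $\Sigma_T$ of possible component packing structures $\bar{\sigma} = (\sigma_1, \ldots, \sigma_k)$; these can be read off the heads defining $T$, with a finite case split recording which stars receive the empty path $\emp$. For each $\bar{\sigma} \in \Sigma_T$ I introduce a fresh packing-free relation $T_{\bar{\sigma}}$ whose arity is the total number of stars in $\bar{\sigma}$ and which stores, in separate components, the flat fillers of the components of $T$. Because the filler subexpressions are themselves packing-free path expressions over pure variables, every occurrence of $T$ can be rewritten packing-freely: a head $T(e_1, \ldots, e_k)$ is routed to the $T_{\bar{\sigma}}$ determined by the $\ps{e_i}$ and becomes $T_{\bar{\sigma}}(\cdots)$ listing the fillers; a positive body atom becomes $T_{\bar{\sigma}}(\cdots)$ with the fillers placed directly as arguments, ranging over the finitely many compatible $\bar{\sigma}$ (hence a finite disjunction of rules); and a negated atom $\lnot T(e)$ becomes the conjunction of the $\lnot T_{\bar{\sigma}}(\cdots)$ over compatible $\bar{\sigma}$. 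Crucially, since the bracket structure of a packing structure is rigid, the stars of $e_i$ align with those of $\sigma_i$ by a fixed bijection, so matching never needs to split a stored component and can be done by placing path expressions as arguments, without introducing any equation. Negation is preserved as negation, and no recursion is created. The EDB relations are flat (structure $*$), so any surviving body atom carrying a nontrivial packing structure in a component is unsatisfiable on flat instances and its rule is dropped; the output $S$ is flat, so $\Sigma_S = \{(*)\}$ and $T_{(*)}$ is just $S$, requiring no decoding.

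Finally, the relations $T_{\bar{\sigma}}$ may have arity greater than one, so I would apply Theorem~\ref{thm:arity} to reduce all arities back to one; this step uses only concatenation and two fixed atomic values and therefore adds no feature. It remains to confirm that the whole construction stays inside $F - \{\P\}$: no equations, negations, or recursion are introduced beyond those already present, and in the case $\I \notin F$ — where $\pr{P}$ has the single IDB relation $S$ — one argues separately that, since $S$ is the flat output and the purified bodies reference only the flat EDB, no rule can inject a packed value into $S$ without contradicting flatness, so every packing-bearing rule is dead and can simply be deleted, introducing no auxiliary relation. Combining the steps yields a packing-free program in $F - \{\P\}$ computing $Q$, which gives $F \subs F - \{\P\}$.

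The main obstacle I anticipate is the correct orchestration of the flattening in topological order together with the packing-structure bookkeeping: one must guarantee that when the single-IDB purification lemmas are invoked, every relation the current one depends on has already been made flat; one must show that the finite family of structures $\Sigma_T$ suffices and that empty fillers are handled by a finite case split; and — most delicately — one must verify that rewriting positive and negated occurrences of $T$ against the split relations can always be carried out with packing-free path expressions alone, so that the transformation does not smuggle in the $\E$ or $\N$ feature when the original program lacked it.
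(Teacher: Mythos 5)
Your overall strategy is the same as the paper's (process the IDB relations in topological/stratum order, purify via Lemma~\ref{lem:pure-eq}, split each IDB relation into packing-free relations indexed by packing structure, and clean up arity at the end), but there is a genuine gap at exactly the step you flag as ``most delicate'': rewriting a \emph{positive} body occurrence $T(e)$ of an already-processed relation $T$ in a rule of a later stratum. You claim the stars of $e$ align with those of a stored structure by a fixed bijection, so the fillers can be placed directly as arguments of $T_{\bar\sigma}$ with no equation. This presupposes that the variables of $e$ are already pure, which is circular: purity of the variables in the later rule is only established \emph{after} its positive body atoms have been rewritten to range over flat relations, and before that, $\ps{e}$ does not determine the packing structure of $\nu(e)$. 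Concretely, if $T$ is populated by a rule with head $T(\pvar x \conc \pack{\pvar y} \conc \pvar z)$ and a later rule contains the positive atom $T(\pvar u \conc \pvar v)$, then $\ps{\pvar u \conc \pvar v} = *$ has one star while the stored structure $* \conc \pack{*} \conc *$ has two, yet the atom is satisfiable (e.g.\ with $\pvar u = \pvar x \conc \pack{\pvar y}$); there is no star bijection, and the relationship between $\pvar u, \pvar v$ and the stored components cannot be expressed by argument placement alone.

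The paper breaks this circularity by replacing $T(e)$ with $T_{ps}(\pvar e_1,\dots,\pvar e_m) \land e = e'$, where the $\pvar e_i$ are fresh variables bound by the flat relation $T_{ps}$ (hence pure) and $e'$ is the template $ps$ with its stars filled by the $\pvar e_i$; the deferred equation is then a half-pure equation that the associative-unification machinery of Lemmas~\ref{lem:non-pure-eq} and~\ref{lem:pure-eq} resolves when that later stratum is purified. The equations (and auxiliary arities) so introduced do not leave the fragment, because this multi-relation case only arises when $\I \in F$, and $\E$ is redundant in the presence of $\I$ by Theorem~\ref{thm:equatons} (likewise $\A$ by Theorem~\ref{thm:arity}); when $\I \notin F$ the program has a single IDB relation and the problematic rewriting step never occurs. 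As written, your argument omits the mechanism that makes this step sound, so you should either adopt the equation-based rewriting or supply a replacement for it.
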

\begin{proof}
Consider a query computed by a nonrecursive program $\pr P$.  We
must show that $\pr P$ can be equivalently rewritten without
packing.  If $\pr P$ has only one IDB predicate,
Lemma~\ref{lem:pure-eq} gives us what we want.  Indeed, by the
Lemma, we may assume that equations are already free of
packing.  Now since the input is a flat instance, any positive
(negated) EDB predicate that contains packing may be taken to be
always false (true).  Also, the result of the query is a flat
instance, so IDB predicates containing packing are false as well.
We thus obtain a program free of packing as desired.

When $\pr P$ uses intermediate predicates, the elimination of
packing from IDB predicates requires more work.  Since $\pr P$ is
nonrecursive, we may assume that every stratum involves only one
IDB relation name.  Since arity is redundant, we may assume that
$\pr P$ does not use arity, but feel free to use arity in the
rewriting of $\pr P$.

Let us consider the first stratum.  For every rule, we proceed as
follows.  Let $R(e)$ be the head of the rule.  Let $m$ be the
number of stars in $\ps{e}$ and let $e_1$, \ldots, $e_m$ be the
components of $e$.  Replace the head with $R_{\ps{e}}(e_1,
\ldots, e_m)$ where $R_{\ps{e}}$ is a fresh relation name.

After this step, the rules in the first stratum no longer contain
packing in the head.  Of course, $R$-predicates in rules in later
strata must now be updated to call the new relation names.
So, assume $R(e)$ appears in the body of some later rule $r$.  For each of
   the packing structures $ps$ introduced for $R$, we
make a copy of $r$ in which we replace $R(e)$ by the
   conjunction $R_{ps}(\pvar e_1, \ldots, \pvar e_m) \land e=e'$, where
   \begin{itemize}
     \item $m$ is the number of stars in $ps$;
     \item $\pvar e_1, \ldots, \pvar e_m$ are fresh path
     variables; and
     \item $e'$ is obtained from the packing structure $ps$ by replacing the
     $i$th star by $\pvar{e_i}$, for $i=1,\ldots,m$.
   \end{itemize}
This rewriting introduces equations in later strata, which is
necessary because these later strata have not yet been
purified per Lemma~\ref{lem:pure-eq}.

We do the above for every stratum.  So, stratum by stratum, we
first remove packing from equations, leaving only pure variables
in rules; we replace head predicates;
and rewrite calls to these head predicates in later rules.

After this transformation, packing still appears in negated IDB
predicates, which have been untouched so far.  Fortunately, all
rules have pure variables at this point.  Thus, a literal $\neg
R(e)$, where $\ps e$ matches one of the packing structures of
$R$, say $ps$, with $m$ stars, can now be replaced by $\lnot
R_{ps}(e_1, \ldots, e_m)$, where $e_i$ is the $i$th component of
$e$.  If $\ps e$ does not match any of the packing structures
introduced for $R$, the negative literal is true on flat
instances and can be omitted.

Observing that packing in EDB predicates can be handled as in the
semipositive case, we are done.
\end{proof}

\begin{example}
Rewriting the program from Example~\ref{cool} without packing
yields a program with 28 rules:
\begin{lstlisting}[style=Prolog-pygsty, escapeinside={[]}]
T(-u, -s, -v) :- R(-u/-s/-v), S(-s).
A :- T(-x1,-x2,-x3), T(-y1,-y2,-y3), T(-z1,-z2,-z3),
     -xi;-yi, -xj;-zj, -yk;-zk.
% for i=1,2,3, j=1,2,3, k=1,2,3
  \end{lstlisting}
\end{example}

To get from
Lemma~\ref{lem:packing1} to the following theorem, it remains to
show that packing is redundant in the presence of recursion.
Building on the flat--flat theorem for J-Logic
\cite{hidders2017j, hidders2020j} we can close that gap and we obtain:
\begin{theorem} \label{thm:packing}
Packing is redundant.
\end{theorem}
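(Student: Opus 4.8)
The plan is to split on the presence of recursion. By Lemma~\ref{lem:packing1} we already know that $F \subs F - \{\P\}$ whenever $\R \notin F$, so it remains only to treat fragments $F$ with $\R \in F$. If I can establish $F \subs F - \{\P\}$ for every such $F$ as well, then combining the two cases yields redundancy of $\P$ for all fragments, which is the theorem. Throughout I would also freely invoke Theorem~\ref{thm:arity} (arity is redundant) and Theorem~\ref{thm:equatons} ($\E$ is redundant in the presence of $\I$), so that any auxiliary arities or equations introduced by the construction can be stripped away afterwards.

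For the recursive case, the key point is to understand why the nonrecursive argument breaks down: in a nonrecursive program the packing structures $\ps{e}$ occurring in a stratum form a bounded finite family, which is exactly what Lemmas~\ref{lem:non-pure-eq} and~\ref{lem:pure-eq} exploit. Under recursion, however, a (still terminating) program can build packed values of unbounded nesting depth, growing with the size of the input, so no fixed finite family of packing structures suffices and those lemmas no longer apply. I would therefore simulate packing by an explicit flat encoding together with recursion: fix two fresh atomic values $\ell$ and $r$ used nowhere else, and encode a packed value $\pack{p}$ by the flat sequence $\ell \conc \hat p \conc r$, where $\hat p$ is the recursive encoding of $p$. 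Because the input and the output of the query are flat, this encoding acts as the identity on them; only the \emph{intermediate} packed values are replaced by balanced, bracketed flat sequences. Constructing a packed value then amounts to inserting a matched $\ell,r$ pair, while the operations that inspect packing — equations that must match packing structure, and unpacking — become the matching of balanced brackets, an inherently recursive (Dyck-style) task that recursion can perform. This is precisely the content of the flat--flat theorem for J-Logic \cite{hidders2017j, hidders2020j}, and the construction transfers to Sequence Datalog essentially verbatim once packed values are read as nested J-values; faithfulness of the encoding also preserves termination, so the rewritten program still computes a total query.

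The main obstacle, and the part I expect to require genuine care, is feature bookkeeping: the resulting packing-free program must lie in $F - \{\P\}$, not merely in the full language. Recursion is available by hypothesis, and the higher arities or auxiliary equations produced by the bracket-matching gadgets can be eliminated afterwards using Theorems~\ref{thm:arity} and~\ref{thm:equatons}. The genuinely delicate fragments are those containing $\R$ but otherwise lean — for instance $\{\R,\P\}$ or $\{\R,\P,\N\}$ — where one must verify either that the simulation can be realised without introducing intermediate predicates or negation beyond what is already present, or else that such restricted fragments cannot use packing to compute any flat--flat query that they could not already compute without it. Reconciling the J-Logic simulation with the exact fragment $F$ is where the real work lies; once that is discharged, the recursive case combines with Lemma~\ref{lem:packing1} to give $F \subs F - \{\P\}$ for every fragment $F$, establishing the claim.
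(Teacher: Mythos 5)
Your overall decomposition is the same as the paper's: dispose of the nonrecursive case by Lemma~\ref{lem:packing1}, and handle the recursive case by adapting the flat--flat theorem from J-Logic, cleaning up auxiliary arity afterwards via Theorem~\ref{thm:arity}. The gap is in your encoding of packed values. You propose to ``fix two fresh atomic values $\ell$ and $r$ used nowhere else'' and encode $\pack{p}$ as $\ell \conc \hat p \conc r$. But ``used nowhere else'' can only be guaranteed for the program text, not for the data: $\dom$ is countably infinite and the query must be computed on \emph{every} flat input instance, including ones in which your chosen $\ell$ and $r$ occur as ordinary data elements. On such inputs the bracket encoding is ambiguous --- a data occurrence of $\ell$ is indistinguishable from an opening delimiter --- and the rewritten program computes the wrong query. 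There is no way to test ``this atomic value is not $\ell$'' without nonequalities or negation, so the fix is not free in lean fragments either. This is precisely the problem the paper's construction is designed around: it first \emph{doubles} every path $k_1\conc\cdots\conc k_n$ into $k_1\conc k_1\conc\cdots\conc k_n\conc k_n$, so that delimiters can be simulated as \emph{single} (non-doubled) occurrences, distinguishable from data by position rather than by identity. Your sketch omits the doubling entirely, and without it the delimiter simulation fails.

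Secondly, the ``feature bookkeeping'' that you flag as the remaining hard part is exactly what the paper does discharge, and in a specific way you do not anticipate: the doubling and undoubling steps of the published J-Logic construction introduce \emph{negation}, and the paper replaces those steps by negation-free recursive rules that instead use arity (harmless, since arity is redundant by Theorem~\ref{thm:arity}). So the delicate fragments are not handled by arguing that restricted fragments cannot profit from packing, but by reengineering the pre- and post-processing strata. As it stands, your proposal neither gives a correct encoding nor closes the bookkeeping, so the recursive case remains open in your argument.
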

\begin{proof}
It remains to show that
$\P$ is redundant
  in the presence of $\R$.
Earlier work on J-Logic (flat-flat theorem
  \cite{hidders2017j, hidders2020j}) is readily adapted to
  Sequence Datalog and shows that $\P$ is redundant
  in the presence of $\R$ and $\N$.  The general idea of the rewriting used in
  that proof is as follows:
  \begin{enumerate}
    \item We add a new stratum at the beginning of the program,
    where we preprocess
    the input relations
as follows: every path $k_1 \conc
    k_2 \conc \cdots \conc k_n$ is replaced by its doubled
    version $k_1 \conc k_1 \conc k_2 \conc k_2 \conc \cdots \conc k_n \conc
    k_n$.

    \item We modify the program so that it works with doubled EDB and IDB
    relations.  Packing is simulated using
a technique of simulated delimiters, which relies on the doubled
encoding.
    \item In the last step, we undouble the doubled output.
  \end{enumerate}

Steps 1 and 3 as published introduce negation even if the original program does
not use negation. We next show that this can be avoided. Instead, we introduce
arity, which is harmless as arity is redundant.

We double an EDB relation $R$ into $R'$ as follows:
\begin{lstlisting}[style=Prolog-pygsty,label=lst:encoding,escapeinside={[]}]
  T(!, -x) :- R(-x).
  T(-x/@y/@y, -z) :- T(-x, @y/-z).
  R['](-x) :- T(-x, !).
  \end{lstlisting}

We undouble a doubled output relation $S'$ into $S$ as follows:
\begin{lstlisting}[style=Prolog-pygsty,label=lst:decoding,escapeinside={[]}]
  T(-x, !) [] :- S['](-x).
  T(-x, @y/-z) :- T(-x/@y/@y, -z).
  S(-x) :- T(!, -x).
  \end{lstlisting}
\end{proof}

\subsection{Intermediate predicates}

The following result is straightforward: intermediate predicates
can be eliminated by folding in the bodies of the intermediate
rules, using equations to unify calling predicates with
intermediate head predicates.

\begin{theorem}\label{thm:intermediate}
    $\I$ is redundant in the presence of $\E$ and the absence of $\N$
    and $\R$.
\end{theorem}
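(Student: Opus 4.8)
The plan is to eliminate intermediate predicates by \emph{folding}: repeatedly inlining the definition of an auxiliary IDB relation into the bodies of the rules that call it, until only the target relation $S$ remains. Because we are in the absence of recursion, the dependency graph on IDB relation names is acyclic, so I can process the auxiliary predicates in reverse topological order. The absence of negation guarantees every body literal is positive, which is what makes folding sound: a positive IDB subgoal $R(e_1,\dots,e_k)$ in some rule can be replaced by the body of a defining rule for $R$, provided we correctly connect the call arguments $e_i$ to the head arguments of that defining rule. This is exactly where the $\E$-feature is needed: rather than performing syntactic unification (which may fail because the call arguments and head arguments are arbitrary path expressions that need not unify as terms), I introduce an equation $h_i = e_i$ for each component, where $h_i$ is the $i$th head component of the defining rule. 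Since arity is redundant (Theorem~\ref{thm:arity}) and $\E$ is already available, this is harmless.

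First I would formalize a single folding step. Suppose $R$ is an IDB relation name, defined by rules $R(\bar h^{(1)}) \leftarrow B^{(1)}, \dots, R(\bar h^{(t)}) \leftarrow B^{(t)}$, and suppose a positive subgoal $R(\bar e)$ appears in the body of some other rule $r\colon H \leftarrow B \land R(\bar e)$. After renaming the variables of each defining rule apart from those of $r$, I replace $r$ by the $t$ rules
\begin{equation*}
H \leftarrow B \land B^{(j)} \land \bar h^{(j)} = \bar e, \qquad j = 1,\dots,t,
\end{equation*}
where $\bar h^{(j)} = \bar e$ abbreviates the conjunction of componentwise equations. I would argue that this replacement preserves the computed query on every instance: a valuation satisfies the new disjunction of bodies exactly when the original call $R(\bar e)$ could be derived, because in the absence of recursion and negation the meaning of $R(\bar e)$ is precisely ``some defining rule fires with its head unifying (via the equations) to $\bar e$.'' Safety is maintained because the newly introduced variables from $B^{(j)}$ are limited through $B^{(j)}$ itself, and the equations only add limitedness.

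Next I would set up the global termination argument. I process IDB relations other than $S$ in reverse topological order of the (acyclic) dependency graph: when I eliminate a relation $R$, I fold away every occurrence of $R$ in the body of every rule, then discard the defining rules of $R$. Eliminating $R$ cannot reintroduce occurrences of an already-eliminated relation, since folding only inserts the bodies $B^{(j)}$ of $R$'s own rules, which by the topological ordering refer only to relations not yet eliminated (or to $S$, which we never expand into). Hence each IDB relation is eliminated exactly once, and after finitely many steps the only surviving IDB relation is $S$, giving a program in the fragment $\{\E\}$ (plus possibly $\A$, which is redundant).

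The main obstacle I anticipate is not the soundness of a single fold but the bookkeeping needed to guarantee global termination and to confirm that no forbidden feature sneaks back in. Two points deserve care. First, I must verify that folding never creates a cycle in the dependency graph — it does not, precisely because the original program is nonrecursive and folding strictly follows the topological order, so no new IDB-to-IDB edge is created that closes a loop. Second, I must check that the rewriting stays within $\{\E,\A\}$: folding introduces only equations and (transiently) higher arities, never negation and never recursion, so the resulting program indeed witnesses that $\I$ is redundant in the presence of $\E$ and the absence of $\N$ and $\R$. A subtle case to handle explicitly is when an IDB relation appears several times in one body or across several bodies; each occurrence is folded independently, which multiplies the number of rules but keeps the process finite since the relation count strictly decreases at each stage.
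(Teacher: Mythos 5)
Your proof is correct and follows essentially the same route as the paper, which establishes this theorem by exactly the folding transformation you describe: inlining the bodies of intermediate rules into their callers and using componentwise equations to unify the call arguments with the head components of the defining rules. The paper only states this in one sentence, and your elaboration of the reverse-topological elimination order, the case of multiple occurrences, and the safety/termination bookkeeping fills in the details consistently with that intent.
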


\section{Inexpressibility results} \label{secprim}

In this section we show various inexpressibility results that lead
to absolute or relative primitive results for various features.

\subsection{Recursion}

To see that recursion is primitive also in the context of
Sequence Datalog, we can make the following observation.

\begin{lemma}\label{lem:upperbound}
  Let $Q$ be a query that can be computed by a nonrecursive program.
  Then for any input instance $\inst$, the lengths of paths in $Q(\inst)$
  are bounded by a linear function of the maximal length of a
  path in $\inst$.
\end{lemma}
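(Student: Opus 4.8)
The plan is to prove the statement by structural induction on the construction of a nonrecursive program, tracking how path lengths in the IDB relations depend on path lengths in the input. The key invariant I would maintain is that for each IDB relation name, there is a linear bound (with program-dependent coefficients) on the lengths of paths appearing in its extension, in terms of the maximal input path length, which I denote by $N$.

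First I would set up the induction along a topological ordering of the IDB relation names, which exists precisely because the program is nonrecursive (no cycle in the dependency graph). For the base case, EDB relations are part of the flat input instance, so their path lengths are trivially bounded by $N$. For the inductive step, I would consider a single rule $H \leftarrow B$ deriving facts for an IDB relation, and analyze how long a path substituted for a head variable can be. The central observation is that every variable occurring in the head is \emph{limited} (since rules are safe), and a limited variable ultimately derives its value either directly from a positive predicate in the body or through a chain of positive equations. I would argue that in either case, the length of the value bound to that variable is bounded by a sum of lengths of values occurring in body predicates, each of which is in turn linearly bounded by $N$ through the induction hypothesis.

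The main technical step is handling the equations. When a variable is limited because it appears on one side of a positive equation $e_1 = e_2$ whose other side has only limited variables, the value of that variable is a \emph{subpath} (contiguous fragment) of the concatenation $\nu(e_2)$. Hence its length is at most $|\nu(e_2)|$, which is a sum of lengths of values bound to variables already known to be limited, plus the lengths of constant atomic values appearing literally in $e_2$. Since each such constituent is linearly bounded (constants contribute a fixed additive amount per rule), the whole side has length bounded by a linear function of $N$. Because the limitedness relation is well-founded (it is an inductively defined least set), I can propagate these linear bounds along the definition of limited variable without circularity, obtaining a single linear bound for every head variable.

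Finally, the head $H$ is a path expression built from limited variables and constants, so the length of any path derived into the IDB relation is at most the sum of the bounds on its constituent variables plus the number of constant symbols, again linear in $N$. Summing the coefficients across all rules and all IDB relations (finitely many, by the topological structure) yields a global linear bound, and since $S$ is one of the IDB relations, $Q(\inst)$ inherits this bound. The part I expect to require the most care is the equation analysis: I must verify that limitedness genuinely guarantees the value is a subpath of a bounded expression and that packing (if present) does not break the length accounting --- though since packing only wraps subexpressions without duplicating length, and we may in any case appeal to Theorem~\ref{thm:packing} to assume packing is absent, this should not affect the linear bound. The coefficient accumulation across strata is routine once the per-rule linear bound is established.
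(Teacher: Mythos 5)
Your proposal is correct, but it takes a genuinely different route from the paper's proof. The paper argues by reduction: it first deletes all negated literals (the resulting program computes a superset of $Q$, so any upper bound for it transfers to $Q$), then invokes Theorem~\ref{thm:intermediate} to fold the program into one with no intermediate predicates, and finally reads off a bound $a_i x + b_i$ directly from the single head expression of each remaining rule. You instead run a direct induction along the topological order of the IDB relation names, carrying a per-relation linear bound and composing linear functions across strata; since a finite composition of linear functions is linear, this closes. Your route is more self-contained --- it needs neither the monotonicity step nor Theorem~\ref{thm:intermediate} --- and it makes explicit a point the paper glosses over: a variable that is limited only through a chain of equations (e.g.\ $\pvar w = \pvar y \conc \pvar z$ with $\pvar y,\pvar z$ source variables) can be bound to a path longer than the maximal input length, so the per-variable bound must itself be propagated as a linear function rather than assumed to be $N$; the paper's stated per-rule coefficient silently presumes each head path variable is bound to a path of length at most $x$, whereas your equation analysis supplies the missing propagation. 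Two details to tighten: when discharging packing, cite Lemma~\ref{lem:packing1} rather than Theorem~\ref{thm:packing}, since the recursive case of the latter is proved by a rewriting that introduces recursion, while the former preserves nonrecursiveness; and the ``subpath'' claim for equation-limited variables should be stated for flat instances (where, by the purity analysis of Section~\ref{secpure}, variables nested under packing render the rule unsatisfiable), so that packing cannot hide length inside a single path element and break the accounting.
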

\begin{proof}
Let $\pr P$ be a nonrecursive program computing a query $Q$.
Let $\pr{P'}$ be $\pr P$ with all negated literals removed.  The
$Q'$ query computed by $\pr{P'}$ contains $Q$, so if we can prove
the claim for $Q'$, it also holds for $Q$.

By Theorem \ref{thm:intermediate}, we know that $Q'$ is computable by a
  program $\pr{P''}$ that does not use intermediate predicates.
  Let $n$ be the number of rules, and for
  $i= 1, \ldots, n$, let $S(e_i)$ be the head of the $i$th rule;
$a_i$ the number of path variables in $e_i$; and $b_i$
  the number of atomic values and variables in $e_i$.
  Then the length of sequences returned by the $i$th rule is at
  most $a_i x + b_i$, with $x$ the maximal length of a sequence in the input.
  The desired linear function can now be taken to be $ax+b$, where
  $a={\rm max}\{a_i \mid 1 \leq i \leq n\}$ and
  $b={\rm max}\{b_i \mid 1 \leq i \leq n\}$.
\end{proof}

We immediately get:

\begin{proposition}\label{prop:recursion2}
Let $a$ be a fixed atomic value and let $Q$ be any query from
$\{R\}$ to $S$ satisfying the property that for every instance
$\inst$ and every natural number $n$ such that $R(a^n) \in
\inst$, the string $a^{n^2}$ is a substring of a path in $Q(I)$.
Then $Q$ is not expressible without recursion.
\end{proposition}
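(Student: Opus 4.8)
The plan is to argue by contradiction, leaning entirely on Lemma~\ref{lem:upperbound}. Suppose, for the sake of contradiction, that $Q$ is computable by some nonrecursive program. Then $Q$ falls under the hypotheses of Lemma~\ref{lem:upperbound}, so I would first extract from that lemma constants $c$ and $d$ (I avoid reusing the symbol $a$, which here denotes the fixed atomic value) such that, for every input instance $\inst$, every path occurring in $Q(\inst)$ has length at most $c\cdot x + d$, where $x$ is the maximal length of a path in $\inst$.

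Next I would instantiate this bound on a carefully chosen family of inputs. For each natural number $n$, consider the flat instance $\inst_n = \{R(a^n)\}$ over $\{R\}$. The maximal length of a path in $\inst_n$ is exactly $n$, so the lemma tells us that every path in $Q(\inst_n)$ has length at most $c\,n + d$. On the other hand, since $R(a^n) \in \inst_n$, the defining property assumed of $Q$ guarantees that the string $a^{n^2}$ occurs as a substring of some path $p \in Q(\inst_n)$. As $p$ contains $n^2$ consecutive copies of $a$, its length is at least $n^2$. Combining the two observations yields the inequality $n^2 \le c\,n + d$, which must hold for every $n$.

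The final step is simply to note that this inequality is violated for all sufficiently large $n$, because its left-hand side grows quadratically while its right-hand side grows only linearly; choosing any $n$ with $n > c + d$ already suffices. This contradiction establishes that no nonrecursive program can compute $Q$. I do not anticipate any real obstacle in carrying this out: the substantive work has already been done in Lemma~\ref{lem:upperbound}, whose linear length bound (obtained via Theorem~\ref{thm:intermediate} to first eliminate intermediate predicates) is precisely the tool needed. The remaining argument is just the elementary observation that a quadratic lower bound on output length is incompatible with a linear upper bound.
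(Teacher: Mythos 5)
Your proof is correct and is exactly the argument the paper intends: the proposition is stated as an immediate consequence of Lemma~\ref{lem:upperbound}, and your instantiation on the instances $\{R(a^n)\}$ together with the quadratic-versus-linear length comparison is precisely how that consequence is obtained.
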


We readily obtain:

\begin{theorem}\label{thm:recursion}
    Recursion is primitive.
\end{theorem}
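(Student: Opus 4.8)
The plan is to instantiate Proposition~\ref{prop:recursion2}: it suffices to exhibit a single query that lies in the fragment $\{\R\}$ and satisfies the squaring property, for such a query is then not expressible in any recursion-free fragment, and in particular not in $\feat-\{\R\}$, giving $\{\R\}\nsubs\feat-\{\R\}$. The delicate point is that the witness must use \emph{only} recursion: I may not call on intermediate predicates (which are genuinely needed alongside recursion, and so are unavailable here), nor on arity, equations, negation, or packing. Consequently all auxiliary bookkeeping must be folded into a single unary IDB relation $S$, which I do by laying out several ``fields'' inside one path, separated by a fixed delimiter $b\neq a$.

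Concretely, I would take the query from $\{R\}$ to $S$ computed by the two-rule program consisting of
$$\ru{S(\pvar x \conc b \conc b \conc \pvar x)}{R(\pvar x)}$$
and
$$\ru{S(\pvar w \conc b \conc \pvar p \conc \pvar q \conc b \conc \pvar q)}{S(a \conc \pvar w \conc b \conc \pvar p \conc b \conc \pvar q)}.$$
The first rule initialises, for each input path $\pvar x$, the configuration ``counter $=\pvar x$, accumulator $=\emp$, addend $=\pvar x$''. The second rule carries out one round of repeated addition: it strips a single leading $a$ off the counter and appends a copy of the addend to the accumulator. Starting from $R(a^n)$, the two explicit occurrences of $b$ force every intermediate match to be unambiguous, and after $n$ rounds the counter is exhausted and $S$ contains the fact $b\conc a^{n^2}\conc b\conc a^n$; hence $a^{n^2}$ occurs as a substring of a path in the output, exactly as Proposition~\ref{prop:recursion2} requires. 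The program uses a single IDB relation, arity one, and no equation, negation, or packing, so it belongs to $\{\R\}$.

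The main obstacle is termination, which the definition of ``computes a query'' demands on \emph{every} flat instance, not merely on inputs of the form $a^n$. Since $b$ may itself occur in the input, the recursive rule can fire through spurious splittings in which the path variables absorb delimiters, and a naive length argument fails because the accumulator grows. The fix is built into the shape of the rule: its head begins with $\pvar w$ while its body begins with $a\conc\pvar w$, so under \emph{every} valuation the number of leading $a$'s of the derived fact is exactly one less than that of the consumed fact. This ``leading-$a$ count'' is a non-negative integer that strictly decreases along every application of the recursive rule, and since each fact admits only finitely many splittings the derivation tree is finitely branching and of bounded depth; hence $\pr P(\inst)$ is finite for every input $\inst$. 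With termination secured, the program computes a query in $\{\R\}$ having the squaring property, so by Proposition~\ref{prop:recursion2} it is not expressible without recursion; therefore $\{\R\}\nsubs\feat-\{\R\}$ and recursion is primitive.
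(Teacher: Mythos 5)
Your proof is correct, and it rests on the same two pillars as the paper's: Lemma~\ref{lem:upperbound} / Proposition~\ref{prop:recursion2} on one side, and a single-IDB, unary, recursion-only squaring program on the other. The difference is in how the three conceptual fields (counter, accumulator, addend) get folded into one unary predicate. The paper first writes the natural three-place program in $\{\A,\I,\R\}$, drops the final projection rule to stay within one IDB relation, and then invokes the generic arity-elimination encoding of Lemma~\ref{lemenc} to collapse the ternary $T$ into a unary one; the resulting query is ``unnatural'' but still satisfies the hypothesis of Proposition~\ref{prop:recursion2}. You instead design a bespoke delimiter layout $\pvar w \conc b \conc \pvar p \conc b \conc \pvar q$ and verify by hand the two obligations that the paper inherits from Lemma~\ref{lemenc}: unambiguity of the intended matches on inputs $a^n$, and termination on arbitrary flat inputs (your decreasing leading-$a$ count, which works because the explicit $b$ immediately after $\pvar w$ in both head and body caps the leading run of $a$'s). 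Your route is more self-contained and makes the termination issue --- which the paper passes over rather quickly for its $\pr{P''}$ --- fully explicit; the paper's route is shorter because it reuses machinery already established for Theorem~\ref{thm:arity}. Both are valid; note only that unambiguity on $a^n$ is not strictly needed, since Proposition~\ref{prop:recursion2} asks merely that $a^{n^2}$ occur as a substring of \emph{some} output path, so spurious extra facts are harmless.
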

\begin{proof}
First, we show that $\R$ is primitive in the presence of $\I$.
Consider the following recursive program $\pr P$, computing the
query $Q$ returning all
paths $a^{n^2}$ where $n$ is a natural number such that $R(a^n)$
is in the input:
  \begin{lstlisting}[style=Prolog-pygsty,label=lst:primR1]
  T(!, -x, -x) :- R(-x).
  T(-y/-x, -x, -z) :- T(-y, -x, a/-z).
  S(-y) :- T(-y, -x, !).
  \end{lstlisting}
By Proposition~\ref{prop:recursion2}, query $Q$ is not
expressible without recursion.

The above program uses intermediate predicates.  In the absence
of this feature, consider just the program $\pr{P'}$ consisting
of the first two rules.  Strictly, this program does
not compute a query, as $T$ is ternary.  However, we can turn
$\pr{P'}$ into a program $\pr{P''}$ using the arity simulation
technique of Lemma~\ref{lemenc}.  Program $\pr{P''}$
computes a well-defined query $Q''$ from $\{R\}$ to $T$.
Although $Q''$ is not a natural query,
Proposition~\ref{prop:recursion2} applies to it, so it is not
expressible without recursion.
\end{proof}

\subsubsection{Boolean queries}\label{sub:boolQ}

The above queries showing primitivity of recursion are unary.
What about boolean queries?  It turns out that for boolean
queries, in the presence of intermediate predicates,
recursion is still primitive.  In the absence of intermediate
predicates, however, recursion is redundant for boolean queries,
for trivial reasons.

Let us go in a bit more detail.  Let $R$ be a binary relation
viewed a directed graph.  Let $Q_{\rm a \rightarrow b?}$ be the
boolean query from $\{R\}$ to $S$ that checks whether $b$ is
reachable from $a$.  It is well-known that $Q_{\rm a \rightarrow
b?}$, as a classical relational query, is not computable in
classical Datalog without recursion.
We can view $Q_{\rm a\to b?}$ as a query on sequence databases by encoding
edges $(a,b)$ by paths $a \conc b$ of length two. Under this
encoding, the query is clearly computable by a Sequence Datalog
program in the fragment $\{\I,\R\}$:
  \begin{lstlisting}[style=Prolog-pygsty,label=lst:primR3]
  T(@x/@y) :- R(@x/@y).
  T(@x/@z) :- T(@x/@y),R(@y/@z).
  S :- T(a/b).
  \end{lstlisting}
We can now show that $Q_{\rm a \rightarrow b?}$ is not computable
without recursion in Sequence Datalog by showing that, on input
instances containing only sequences of length two, any
nonrecursive Sequence Datalog program can be simulated by a
classical nonrecursive Datalog program.  This simulation is
similar to the one shown in Lemma~\ref{lem:seq_to_c} appearing
later.  The only added complication is that, due to intermediate
predicates, sequences of lengths longer than two can appear.
However, since there is no recursion, these lengths are bounded
by a constant depending only on the program.

In the absence of the $\I$-feature, we note that any boolean
query, computed by a recursive program without intermediate
predicates, is already computed by the nonrecursive rules only.
Indeed, if the result of the query is \Fl{}, then none of the
rules is fired.  If, on the other hand, the result of the query
is \Tr{}, then at least one rule is fired; however, no recursive
rule can be fired before at least one nonrecursive rule is fired.

\subsection{Intermediate predicates}

It is well known that in classical Datalog, without intermediate
predicates, we can not express queries that require universal
quantifiers \cite{ch_82}.  We can transfer this result to
Sequence Datalog by a simulation technique.

Let $\sch$ be a monadic schema and let $\inst$ be an instance of $\sch$.
We say that $\inst$ is ``two-bounded'' if only paths of
lengths one or two occur in $I$.
We can encode two-bounded instances by classical instances as follows.
Let $\sch^c$ (`c' for classical) be the schema that has two relation names
$R^1$ and $R^2$ for each $R \in \sch$.
For $\inst$ two-bounded as above, we define the classical instance $\inst^c$ of
$\sch^c$ as follows:
\begin{itemize}
    \item $\inst^c(R^1)=\{a \in \dom \mid a \in \inst(R)\}$;
    \item $\inst^c(R^2)=\{(a,b) \mid a\conc b \in \inst(R)\}$.
\end{itemize}

\begin{lemma}\label{lem:seq_to_c}
  Let $\pr P$ be a program in the fragment
  $\{\E, \N, \R\}$, with IDB
  relation name $S$, such that the result of\/ $\pr P$ on a two-bounded instance
  is still two-bounded.
  Then there exists a semipositive classical Datalog program $\pr{P^c}$ using
  only the IDB relation names $S^1$ and $S^2$, such that for every two-bounded
  instance $\inst$ of\/ $\sch$, we have
  ${\pr{P^c}}(\inst^c) = (\pr{P}(\inst))^{\bf c}$.
\end{lemma}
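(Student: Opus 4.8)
The plan is to translate $\pr P$ rule by rule into classical rules, using a finite case split on the lengths of the path variables, while exploiting that on two-bounded instances the single IDB relation $S$ together with all EDB relations only ever hold paths of length one or two. Note first that, since $\pr P$ belongs to $\{\E,\N,\R\}$ and hence uses no intermediate predicates, it has a single IDB relation $S$; as $S$ must occur in rule heads, negation can only be applied to EDB relations, so $\pr P$ is in effect semipositive, and its least fixpoint is built up by an increasing sequence of stages all contained in $\pr P(\inst)$, so by hypothesis $S$ is two-bounded at every stage. The key preliminary observation is a boundedness property: on a two-bounded instance, every valuation $\nu$ satisfying the body of a rule $r$ binds each path variable of $r$ to a path of length at most a constant $B_r$ depending only on $r$. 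This follows from safety, since every variable is limited: a variable occurring in a positive (monadic, two-bounded) predicate is bound to a subpath of a length-$\le 2$ path, and the inductive clause of the definition of limited variable propagates a finite length bound from one side of a positive equation to the other. Moreover, the head $S(e)$ of a firing rule must satisfy $|\nu(e)|\le 2$, so every path variable occurring in the head, and every variable occurring as a full predicate argument, is in fact bound to a path of length at most two; only variables occurring solely inside equations may be longer.

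I would then build $\pr{P^c}$ as follows. For each rule $r:\ S(e)\gets B$ and each length assignment $\lambda$ giving every path variable a length in $\{0,\dots,B_r\}$ (atomic variables fixed to length one), replace each path variable by a block of that many fresh classical variables; since $\pr P$ uses no packing, every path expression of $r$ then becomes a string of atomic terms of determined length. A positive predicate whose argument has length one or two is translated to $R^1$ or $R^2$ (resp.\ $S^1$ or $S^2$); if the length is $0$ (asserting $R(\emp)$) or greater than $2$, the argument cannot lie in a two-bounded relation and the rule is discarded. Positive equations, now between equal-length strings of atomic terms, are solved componentwise by unification (discarding the rule when the sides have different lengths or unification fails); because the original rule is safe, this grounds every surviving classical variable in a positive $R^j$ or $S^j$ atom, so the rule stays range-restricted. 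A negated equation is dropped when its two sides have unequal length under $\lambda$, and otherwise, being a disjunction of the $k$ componentwise disequalities, is split into $k$ separate rules each carrying one atomic disequality. Negated EDB predicates are translated to $\neg R^1$ or $\neg R^2$, or dropped as trivially true when the length is $0$ or exceeds $2$. Finally the head $S(e)$ is translated according to $|e|$ under $\lambda$, and assignments making the head empty or longer than two are discarded, which is sound precisely because the hypothesis forbids $\pr P$ from deriving such a fact. The union over all $r$ and $\lambda$ is a finite, semipositive classical program whose only IDB relation names are $S^1$ and $S^2$.

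For correctness I would exhibit, for each rule $r$ and each two-bounded instance $\inst$, a bijection between the valuations $\nu$ satisfying the body of $r$ on $\inst$ and the pairs $(\lambda,\nu^c)$ consisting of a length assignment together with a valuation $\nu^c$ satisfying the body of the corresponding translated rule on $\inst^c$; the bijection records the length of each path variable (yielding $\lambda$) and the atomic values of its characters (yielding $\nu^c$). Under this correspondence, body satisfaction in $\pr P$ matches body satisfaction in $\pr{P^c}$ — using that $\inst^c$ faithfully encodes $\inst$ and that equality (resp.\ disequality) of equal-length strings is componentwise — and the head fact produced is the $c$-encoding of the original. A routine induction on the stages of the least-fixpoint computation, relying on $S$ staying two-bounded throughout, then yields $\pr{P^c}(\inst^c)=(\pr P(\inst))^{\bf c}$.

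The main obstacle I expect is the careful treatment of the variable-length case split together with the empty path: path variables may legitimately be bound to $\emp$, so length-$0$ blocks must be handled, and one must verify that discarding the length-$0$ and length-$>2$ cases is genuinely sound — the former because two-bounded instances contain no empty paths, the latter because of the two-boundedness hypothesis on the output of $\pr P$. A secondary point needing care is keeping the construction inside the target language: the negated equations must be expanded into disjunctions of atomic disequalities without introducing any new IDB relation name, and the positive equations must be fully resolved by unification so that range-restriction and semipositivity are preserved.
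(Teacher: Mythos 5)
Your proposal is correct and takes essentially the same approach as the paper: a finite case split on the lengths of path variables (justified by two-boundedness together with the limited-variable induction), followed by componentwise elimination of equations and translation of length-one and length-two predicates into $R^1$ and $R^2$. The only cosmetic difference is that the paper substitutes the possible segments directly (splitting each predicate variable into $\emp$, $@x$, or $@x_1\cdot @x_2$, and resolving equation-only variables against the already-grounded side), whereas you enumerate length assignments up to a uniform bound $B_r$ and then unify, which amounts to the same finite enumeration.
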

\begin{proof}
Our goal is to eliminate path variables as well as concatenations
in path expressions.  We start with path variables.  In any rule
containing a head predicate or positive predicate of the form
$S(e_1 \conc \pvar x \conc e_2)$ or $R(e_1 \conc \pvar x \conc
e_2)$, we can replace $\$x$ either by $\emp$, $@x$, or $@x_1\cdot
@x_2$ (splitting the rule in three versions).

Path variables may still occur in equations.  By safety, they
must appear in positive equations, and inductively we may assume
that any remaining path variable $\$x$ occurs in a positive
equation $e_1=e_2$ where $e_1$ contains no path variables.
This equation is then of the form
$a_1 \cdots a_n = b_1 \cdots b_m \conc
    \pvar x \conc e$, where the $a$s and $b$s are atomic variables or values.
    \begin{itemize}
      \item If $m = n$, replace $\pvar x$ by the empty path.
      \item If $m > n$, the equation is unsatisfiable and the rule can be
      removed.
      \item If $m < n$, replace $\pvar x$ by
$a_{m+1} \cdots a_i$, for $m < i \leq n$ (splitting the rule in
$n-m+1$ versions).
    \end{itemize}

After these steps, all equations (positive or negated) are of the
form $a_1 \cdots a_n = b_1 \cdots b_m$, where the $a$s and $b$s
are atomic variables or values.  Such equations can be easily
eliminated.  Moreover, any predicates, possibly
negated, that are of the form $R(e)$ with $e$ empty or strictly longer
than two, can be eliminated as well.

We finally replace every remaining predicate (head or body) of the form
$R(a)$ by $R^1(a)$ and every predicate of the form $R(a_1 \conc
a_2)$ by $R^2(a_1, a_2)$, and we are done.
\end{proof}

As a consequence, the query computed by the
following program, belonging to the
fragment $\{\I,\N\}$, cannot be expressed without intermediate
predicates:
  \begin{lstlisting}[style=Prolog-pygsty,label=lst:doubleNeg1]
  W(@x) :- R(@x/@y), + B(@y).
  S(@x) :- R(@x/@y), + W(@x).
  \end{lstlisting}
Indeed, the classical counterpart of this query is the query
asking, on any directed graph where some nodes are ``black'',
for all nodes with only edges to black nodes.
That query is well-known not to be expressible in semipositive Datalog
\cite{ch_82}.

We thus obtain:

\begin{theorem}\label{thm:intermediateNPrim}
    $\I$ is primitive in the presence of $\N$.
\end{theorem}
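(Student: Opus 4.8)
The plan is to reduce the primitivity of $\I$ in the presence of $\N$ directly to a known classical inexpressibility fact, using the simulation machinery just developed in Lemma~\ref{lem:seq_to_c}. The theorem asks us to exhibit a query in some fragment containing $\I$ and $\N$ that is not computable by any program in $\feat - \{\I\}$, i.e.\ using negation, equations, recursion, packing, and arity, but no second IDB relation name. The candidate query is already displayed in the excerpt: the two-rule program in fragment $\{\I,\N\}$ whose classical counterpart, on a black-coloured directed graph, returns all nodes all of whose out-neighbours are black. This query is known not to be expressible in semipositive classical Datalog~\cite{ch_82}, and that is the fact we want to transport to the sequence setting.

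First I would fix the candidate query $Q$ as the one computed by the displayed $\{\I,\N\}$ program, noting that on two-bounded instances its output is a set of atomic values, hence two-bounded. Next I would argue by contradiction: suppose $Q$ were computable by some program $\pr P$ in $\feat - \{\I\}$. Since packing and arity are redundant (Theorems~\ref{thm:packing} and~\ref{thm:arity}), I may assume $\pr P$ uses neither, so $\pr P$ lies in the fragment $\{\E,\N,\R\}$ with a single IDB relation name $S$. The key step is then to invoke Lemma~\ref{lem:seq_to_c}: because the input instances of interest are two-bounded (edges coded as length-two paths, black nodes as length-one paths) and the output of $\pr P$ on them is two-bounded, the Lemma produces a semipositive classical Datalog program $\pr{P^c}$ over the classical encoding, using only the IDB names $S^1$ and $S^2$, that faithfully computes the classical counterpart of $Q$.

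The remaining step is to observe that $\pr{P^c}$, having no intermediate IDB relation names beyond $S^1,S^2$ and being semipositive, is a classical semipositive Datalog program without intermediate predicates computing precisely the ``all out-neighbours black'' query. This directly contradicts the classical result of~\cite{ch_82}. I would be slightly careful about two technical points. One is checking that Lemma~\ref{lem:seq_to_c}'s hypothesis is genuinely met: I must verify that $\pr P$, when restricted to two-bounded inputs, yields two-bounded output, which follows because $Q(\inst)$ is a set of atomic values. The other is confirming that $\pr{P^c}$ really counts as ``without intermediate predicates'' in the classical sense --- the Lemma delivers only $S^1$ and $S^2$, the two relations encoding the single output relation $S$, so no genuine auxiliary IDB arises and the classical lower bound applies.

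The main obstacle I anticipate is matching the nonexpressibility statement of~\cite{ch_82} precisely to the shape that Lemma~\ref{lem:seq_to_c} produces. The classical result is about semipositive (or nonrecursive-with-negation) Datalog without intermediate predicates, and I would need to ensure that a possibly recursive single-IDB sequence program maps, under the simulation, to a classical program still covered by that result --- in particular that recursion in $\pr P$ does not secretly reintroduce the expressive power of an intermediate predicate on the classical side. Since the Lemma explicitly preserves the fragment $\{\E,\N,\R\}$ on one IDB name and yields a classical program on the two encoding relations of that same single output predicate, this should go through, but it is the point that most deserves a careful sentence rather than a wave of the hand.
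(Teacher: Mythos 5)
Your proposal is correct and follows essentially the same route as the paper: the paper likewise takes the displayed $\{\I,\N\}$ program for the ``all out-neighbours black'' query, reduces a hypothetical $\I$-free program to the fragment $\{\E,\N,\R\}$ with one IDB name via the redundancy of arity and packing, applies Lemma~\ref{lem:seq_to_c} on two-bounded encodings, and concludes from the classical inexpressibility of that query in semipositive Datalog~\cite{ch_82}. Your closing worry is moot because the classical lower bound is invoked for semipositive Datalog as a whole, so recursion or the pair $S^1,S^2$ on the classical side cannot rescue the simulation.
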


We also have the following primitivity result in the presence of
recursion.  The proof merely combines some
observations we have already made.

\begin{theorem}\label{thm:intermediateRPrim}
    $\I$ is primitive in the presence of $\R$.
\end{theorem}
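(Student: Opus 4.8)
The plan is to reuse the boolean reachability query $Q_{\rm a\to b?}$ together with the two observations already recorded in Section~\ref{sub:boolQ}. Recall that ``$\I$ is primitive in the presence of $\R$'' means $\{\I,\R\} \nsubs \feat - \{\I\}$, so I must exhibit a single query that lies in the fragment $\{\I,\R\}$ but in no fragment omitting $\I$. I would take $Q_{\rm a\to b?}$, the query that encodes a directed graph by length-two paths $\avar x \conc \avar y$ and asks whether $b$ is reachable from $a$. The three-rule program displayed in Section~\ref{sub:boolQ} shows that $Q_{\rm a\to b?}$ is computed by a program in $\{\I,\R\}$, and it is a boolean (nullary) query.

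Next I would argue that $Q_{\rm a\to b?}$ is not computable in $\feat - \{\I\}$. Suppose, for contradiction, that some program $\pr P$ in $\feat - \{\I\}$ computes it. By definition of the $\I$-feature, $\pr P$ has at most one IDB relation name, which must be the nullary output relation $S$. Now I invoke the first observation of Section~\ref{sub:boolQ}: for a boolean query computed by a program without intermediate predicates, no recursive rule can ever fire before a nonrecursive one, and once any rule fires the single nullary fact is already derived; hence the query is computed by the nonrecursive rules of $\pr P$ alone. Discarding the recursive rules therefore yields a \emph{nonrecursive} program in $\feat - \{\I\}$ still computing $Q_{\rm a\to b?}$.

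Finally I would derive the contradiction from the second observation of Section~\ref{sub:boolQ}: on inputs consisting solely of length-two paths, any nonrecursive Sequence Datalog program can be simulated by a classical nonrecursive Datalog program, and graph reachability is not expressible in nonrecursive classical Datalog. Hence no nonrecursive program computes $Q_{\rm a\to b?}$, contradicting the previous paragraph. This establishes $\{\I,\R\} \nsubs \feat - \{\I\}$, i.e.\ that $\I$ is primitive in the presence of $\R$.

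Since both ingredients are already in hand, the proof is essentially pure assembly, and the only point that needs care is the first step, where I must check that the collapse of recursion for boolean queries is unaffected by the remaining features $\A,\E,\N,\P$. This is immediate: because $S$ is the sole IDB and is nullary, $S$ cannot occur negated in any body without violating stratification, so negation, equations, packing, and arity all act only on EDB predicates, and the firing argument goes through verbatim.
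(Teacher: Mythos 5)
Your proof is correct, but it follows a genuinely different route from the paper's. The paper reuses the \emph{squaring} query $a^n \mapsto a^{n^2}$ from the proof of Theorem~\ref{thm:recursion}: if that query were computable without intermediate predicates, then on $I_n=\{R(a^n)\}$ the first rule to fire must be nonrecursive and, since $S$ is the only IDB, must already emit $S(a^{n^2})$ on its own, contradicting the linear output-length bound of Lemma~\ref{lem:upperbound}. You instead take the boolean reachability query $Q_{\rm a\to b?}$ and assemble the two observations of Section~\ref{sub:boolQ}: without $\I$ a boolean query collapses to its nonrecursive rules (your check that the sole nullary IDB cannot occur negated, so the collapse is unaffected by $\N$, $\E$, $\A$, $\P$, is the right point to verify), and a nonrecursive program on length-two inputs reduces to classical nonrecursive Datalog, where reachability is inexpressible. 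Both arguments are sound. The paper's is the more elementary and self-contained (it needs only the length bound), whereas yours leans on the simulation claim that the paper only sketches as ``similar to Lemma~\ref{lem:seq_to_c}'' and on the classical inexpressibility of reachability in first-order logic. In exchange, your argument yields slightly more: it shows that $\I$ remains primitive in the presence of $\R$ even when attention is restricted to boolean (nullary) queries, which the paper's unary squaring query cannot establish.
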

\begin{proof}
Recall the squaring query $Q$ from the proof of
Theorem~\ref{thm:recursion}, which is expressible in the fragment
$\{\I,\R\}$.  Suppose, for the sake of contradiction, that $Q$
can be computed by a program without intermediate predicates.
Consider the behavior of this program on the family of singleton
instances $I_n=\{R(a^n)\}$, for all natural numbers $n$.
Since $Q(I_n)$ is nonempty, at least one of the rules must fire,
which is only possible if at least one of the nonrecursive rules
fires.
Since there are no intermediate predicates, however,
this nonrecursive rule must already
produce the correct output $S(a^{n^2})$.
This contradicts Lemma~\ref{lem:upperbound}.
\end{proof}

\subsection{Equations}

The two theorems in the previous subsection provide counterparts
to Theorem~\ref{thm:intermediate}.  The following theorem confirms
that the presence of equations is necessary for
Theorem~\ref{thm:intermediate}, and implies that the fragments
$\{\I\}$ and $\{\E\}$ are actually equivalent.

\begin{theorem}\label{thm:equationsPrim}
    $\E$ is primitive in the absence of $\I$.
\end{theorem}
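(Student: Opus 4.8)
The plan is to exhibit a single query that is easily written with equations but that, once intermediate predicates are forbidden, lies beyond the reach of all the remaining features. Recall that ``$\E$ primitive in the absence of $\I$'' unfolds to $\{\E\} \nsubs \feat - (\{\E\} \cup \{\I\}) = \{\A,\N,\P,\R\}$, so I must produce a query computable in $\{\E\}$ but in no program of the fragment $\{\A,\N,\P,\R\}$. I take the ``only $a$'s'' query $Q_a$ from $\{R\}$ to $S$, returning every path of $R$ lying in $a^*$; it is computed in $\{\E\}$ by the single rule $\ru{S(\pvar x)}{R(\pvar x),\ a\conc\pvar x = \pvar x \conc a}$, since a string commutes with $a$ exactly when it is a power of $a$. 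The work is to show $Q_a$ is not computable in $\{\A,\N,\P,\R\}$.

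Suppose for contradiction that a program $\pr P \in \{\A,\N,\P,\R\}$ computes $Q_a$. As $\pr P$ uses no intermediate predicates, its only IDB relation name is the (unary) output $S$, and its only EDB is the unary $R$; in particular no relation has arity greater than one, so the arity feature is vacuous here. Moreover $\neg S$ can never be used: by stratification, a stratum containing $\neg S$ may define neither $S$ in its head nor $S$ in any later stratum, but $S$ is the sole IDB and must be defined somewhere, so every stratum following the one that computes $S$ would be forced to have empty heads and can be discarded. Hence I may assume $\pr P$ is semipositive with the single unary IDB $S$, negation being applied only to $R$.

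Now run $\pr P$ on $I_n = \{R(a^n)\}$, where correctness forces $\pr{P}(I_n)(S) = \{a^n\}$. In the least-fixpoint evaluation $S$ stays empty until some rule with no positive $S$-atom in its body fires, so this \emph{base} rule $r$ must derive $S(a^n)$ outright, via a valuation $\nu$ with $\nu(t_i) = a^n$ for each positive body atom $R(t_i)$, with $\nu(s_j)\neq a^n$ for each negated body atom $\neg R(s_j)$, and with $\nu(e)=a^n$ for the head $S(e)$. Because $a^n$ is flat, every positive atom and the head must be packing-free (any $\pack{\cdot}$ would contribute a packed element, which $a^n$ lacks), and any negated atom that contains packing is trivially satisfied and may be dropped. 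Thus, for the purpose of this argument, $r$ is an ordinary packing-free, arity-free rule over path variables, atomic variables and constants, so neither $\P$ nor $\A$ was ever relevant to the derivation of $S(a^n)$.

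The contradiction comes from a fooling argument: for $n$ large I construct a string $w \in \dom^n \setminus a^*$ and a valuation $\nu'$ with $\nu'(t_i)=w$ for all $i$ and $\nu'(s_j)\neq w$ for all $j$. Then $r$ fires on $I' = \{R(w)\}$ and derives $S(\nu'(e))$, so $\pr{P}(I')(S)$ is nonempty; but $w\notin a^*$ gives $Q_a(I')=\emptyset$, the sought contradiction, and recursion on $I'$ is irrelevant since one firing already suffices. To build $w$, note that the occurrences of atomic values and atomic variables in the finitely many $t_i$ pin down only boundedly many (say $\leq M$) positions of $a^n$; for $n>M$ there remain positions lying, in every $t_i$, strictly inside the image of a path variable, and recolouring such positions to a fresh value $b$ (outside the finite set $C$ of constants of $\pr P$) while adjusting the relevant path-variable images yields $\nu'$ and $w$. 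The main obstacle is exactly the bookkeeping of the equality constraints that repeated path variables impose: a variable occurring twice in one $t_i$ forces two regions of $a^n$ to agree position by position, so a single recoloured position must be propagated to an entire equivalence class of positions (under the forced equalities of all positive atoms simultaneously) before $\nu'$ is well defined. The delicate step is therefore to show that for large $n$ some such class lies entirely within path-variable regions, and that recolouring it keeps each $\nu'(s_j)$ distinct from $w$ (this is where the freshness of $b$, and a further perturbation if needed, is used); managing this pattern-consistency cleanly is the technical heart of the proof.
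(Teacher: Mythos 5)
Your setup is sound: the only-$a$'s query is essentially the paper's witness too (the paper uses its boolean variant, which lets it dismiss recursion outright), and your preliminary observations --- arity is vacuous with a single unary EDB and a single unary IDB, $\neg S$ cannot occur under stratification, packing is irrelevant on flat instances --- are all correct. But the step you yourself flag as ``the technical heart'' is not merely unfinished; the combinatorial claim it rests on is false. Consider a rule whose positive body atoms are $R(\pvar x \conc a)$ and $R(a \conc \pvar x)$. On a \emph{singleton} instance $\{R(w)\}$ both atoms must be matched to the same fact, so they force $\nu(\pvar x) \conc a = w = a \conc \nu(\pvar x)$: on singletons, two positive atoms simulate exactly the equation $\pvar x \conc a = a \conc \pvar x$ that you used to define $Q_a$ in the first place. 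Such a rule fires on $\{R(w)\}$ if and only if $w \in a^+$, so no fooling word $w \notin a^*$ exists for it. In your position-equivalence picture, the shared variable $\pvar x$ identifies position $j$ with position $j+1$ for every $j$, so all $n$ positions of $a^n$ collapse into one class, and that class is pinned by the constant $a$; there is no free class to recolour, for any $n$, and varying the lengths of the path-variable images cannot help because the pattern system genuinely has no solution outside $a^*$.

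The theorem survives because such a rule misbehaves on instances with two or more facts (it fires on $\{R(a \conc b), R(b \conc a)\}$ with $\nu(\pvar x)=b$, where $Q_a$ returns nothing), but your argument never leaves singleton instances, so it cannot reach a contradiction against a hypothetical program whose base rules all behave this way. This is exactly where the paper's proof diverges: it ``freezes'' the variables of a rule to fresh, pairwise distinct atomic values and evaluates the rule on the instance containing one frozen fact \emph{per positive body atom}. That multi-fact instance decouples the atoms, and correctness of the program on it forces some positive atom to be a constant $R(a^{\ell})$; taking $I=\{R(a^{n})\}$ with $n$ larger than every such $\ell$ then blocks every rule and yields the contradiction. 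To repair your proof you would have to bring in non-singleton fooling instances in some such way; the singleton-only route is unsalvageable.
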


This result follows immediately from the following lemma.

\begin{lemma}\label{prop:onlyA}
Let $a$ be an atomic value.  The boolean query that checks if
the input relation $R$ contains a path consisting exclusively
of $a$'s, cannot be computed by a program that lacks features
$\I$ and $\E$.
\end{lemma}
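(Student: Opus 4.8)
The plan is to first normalize any candidate program and then contradict its correctness by a ``canonical instance'' argument. For the normalization, observe that a program lacking $\I$ has only one IDB relation name, which must be the (nullary) output relation $S$; since the target is a boolean query, the observation in Section~\ref{sub:boolQ} lets us discard all recursive rules without changing the computed query, so we may take the program to be nonrecursive, whence $S$ never occurs in a body. All negation is then on the EDB relation $R$, i.e.\ the program is semipositive with a single IDB, so the single-IDB case treated in Lemma~\ref{lem:pure-eq} and Lemma~\ref{lem:packing1} removes packing without introducing new IDB names (and there are no equations to purify). Finally, arity cannot occur at all, the only relations in sight being the unary $R$ and the nullary $S$. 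Hence we may assume the program is a finite set of rules $S \leftarrow \bigwedge_i R(e_i) \land \bigwedge_j \lnot R(f_j)$, where the $e_i$ and $f_j$ are packing-free, equation-free path expressions over atomic values, atomic variables, path variables and concatenation, and where $Q(\inst)=\true$ iff some rule fires on $\inst$.

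The conceptual heart is to notice what equations buy that positive atoms cannot. With an equation one pins two pattern matches to the \emph{same} path (as in $R(\pvar x) \land \pvar x \conc a = a \conc \pvar x$, which genuinely tests one path), whereas two positive atoms $R(e_1),R(e_2)$ may match two \emph{different} paths of $R$. The adversary will exploit exactly this: it builds a multi-path instance in which every positive atom of a firing rule finds a witness on its own private, non-all-$a$ path. Concretely, I would choose $M$ larger than the length of every constant (variable-free) path expression in the program. On $\inst=\{R(a^M)\}$ the query is \true, so some rule $r$ fires with a valuation $\nu_0$; as $a^M$ is the only path, each positive atom of $r$ maps to $a^M$. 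If some positive atom $e_i$ were variable-free it would be the constant path $a^M$, contradicting the choice of $M$; hence every positive atom of $r$ contains at least one variable.

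Now I would ``freeze'' $r$ into a no-instance on which it still fires. Map each variable of $r$ to its own fresh atomic value, distinct from $a$ and from all constants of the program, obtaining a valuation $\mu$, and set $\inst^\ast=\{\mu(e_1),\dots,\mu(e_k)\}$. Each $\mu(e_i)$ contains a fresh non-$a$ symbol (as $e_i$ has a variable), so $\inst^\ast$ has no all-$a$ path and $Q(\inst^\ast)=\false$. The positive atoms hold by construction; the only genuine point to verify is the negated atoms, and this is where I expect the main — though ultimately mild — difficulty to lie, since negation is non-monotone and the added private paths could in principle satisfy some $R(f_j)$. This is settled by a genericity observation: because distinct variables receive distinct one-symbol images, two path expressions have equal $\mu$-images iff they are syntactically identical; and no $f_j$ can be syntactically identical to any $e_i$, for otherwise $\nu_0(f_j)=\nu_0(e_i)\in R$ would already have blocked $r$ on $\{R(a^M)\}$. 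Hence $\mu(f_j)\neq\mu(e_i)$ for all $i,j$, every $\lnot R(f_j)$ holds on $\inst^\ast$, and $r$ fires there; the degenerate case of a rule with no positive atoms is subsumed by taking $\inst^\ast=\emptyset$. Thus the program answers \true\ on the no-instance $\inst^\ast$, contradicting correctness. The real obstacle is therefore the initial reframing — recognizing that the obstruction is the impossibility, without $\E$, of forcing two matches onto a single path — after which the freezing/genericity step dispatches the negation cleanly.
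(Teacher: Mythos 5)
Your proof is correct and rests on the same two ingredients as the paper's own argument: freezing the variables of a rule to fresh atomic values (with the genericity observation disposing of the negated atoms) played off against an all-$a$ string whose length exceeds that of every constant expression in the program. The only difference is the direction in which the contradiction is extracted --- the paper freezes every satisfiable rule to force a ground atom $R(a^\ell)$ into its body and then exhibits the yes-instance $\{R(a^n)\}$ on which the program wrongly answers false, whereas you take the rule that fires on $\{R(a^M)\}$, note it has no ground positive atoms, and freeze it into a no-instance on which the program wrongly answers true --- and your handling of the negated atoms is, if anything, more explicit than the paper's terse ``unless the rule is unsatisfiable.''
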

\begin{proof}
By the redundancy of packing and arity, we may ignore these
features.  Also, in Section~\ref{sub:boolQ}, we already noted
that in the absence of intermediate predicates,
recursion does not help in expressing boolean queries. Hence,
it suffices to show that
the query cannot be computed by a program in the fragment
$\{\N\}$.  For the sake of contradiction, suppose such a program exists.

Take any rule from the program, and consider the instance $J$
obtained from the positive predicates in the body by ``freezing''
all variables, i.e., viewing them as atomic values distinct from
the atomic values already occurring in the rule.  Unless the rule
is unsatisfiable (in which case we may ignore it), it will fire
on $J$.  So the query is true on $J$ and
the body must contain a positive predicate of the
form $R(a^\ell)$.

Now consider the instance $I=\{R(a^n)\}$ where $n$ is strictly
larger than all values $\ell$ as above found in the rules.  Then
no rule can fire on $I$, but the query is true on $I$, so we have
the desired contradiction.
\end{proof}

Indeed, that query is readily expressed using an equation, as we
well know.

\section{Putting it all together} \label{sechasse}

The results from the previous two sections allow us to
characterize the subsumption relation among fragments (defined in
Section~\ref{secquery}) as follows.

\begin{theorem} \label{thm:subs}
For any fragments $F_1$ and $F_2$, we have $F_1 \subs F_2$
if and only if the following five conditions are satisfied:
\begin{enumerate}
\item $\N \in F_1 \Rightarrow \N \in F_2$;
\item $\R \in F_1 \Rightarrow \R \in F_2$;
\item $\E \in F_1 \Rightarrow (\E \in F_2 \lor \I \in F_2)$;
\item $(\I \in F_1 \land \R \not \in F_1 \land \N \not \in F_1)
\Rightarrow (\I \in F_2 \lor \E \in F_2)$;
\item $(\I \in F_1 \land (\R \in F_1 \lor \N \in F_1))
\Rightarrow \I \in F_2$.
\end{enumerate}
\end{theorem}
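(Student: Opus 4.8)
The plan is to prove both directions of the biconditional, after a normalization step that disposes of the two redundant features. Since $\A$ and $\P$ are redundant in the absolute sense (Theorem~\ref{thm:arity} and Theorem~\ref{thm:packing}), transitivity of $\subs$ gives $F \equiv F\cap\{\E,\I,\N,\R\}$ for every fragment $F$; moreover the five conditions mention only $\E,\I,\N,\R$. I would therefore first reduce to the case $F_1,F_2\subseteq\{\E,\I,\N,\R\}$, which eliminates all bookkeeping about arity and packing (the conversions below may freely introduce and then re-discard those two features).

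For the \emph{if} direction I would turn an arbitrary program of $F_1$ into one of $F_2$ by a short chain of subsumptions built from the two relative-redundancy theorems: Theorem~\ref{thm:equatons} ($\E$ redundant in the presence of $\I$) and Theorem~\ref{thm:intermediate} ($\I$ redundant in the presence of $\E$ and absence of $\N$ and $\R$). Conditions~1 and~2 keep $\N$ and $\R$, since they already lie in $F_2$ whenever used. The treatment of $\E$ and $\I$ is a case analysis: if $\E\in F_1\setminus F_2$, then Condition~3 forces $\I\in F_2$, and I replace equations by intermediate predicates, applying Theorem~\ref{thm:equatons} to $F_1\cup\{\I\}$ together with the trivial $F_1\subs F_1\cup\{\I\}$; symmetrically, if $\I\in F_1\setminus F_2$, then Condition~5 forces $\N,\R\notin F_1$, whereupon Condition~4 yields $\E\in F_2$ and I fold intermediate predicates into equations via Theorem~\ref{thm:intermediate}. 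I would check that the two conversions never fire together (one needs $\I\in F_2$, the other $\I\notin F_2$) and that every feature of the resulting program lands in $F_2$; the only excluded case, ``$\E$ or $\I$ used but neither in $F_2$'', is ruled out precisely because Condition~3 (respectively~4) would then be violated.

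For the \emph{only if} direction I would establish the contrapositive of each condition in turn, exhibiting a query in $F_1$ that is not in $F_2$. Condition~1 uses the non-monotone set-difference query $S(\avar x)\leftarrow R(\avar x),\lnot B(\avar x)$, which lies in $\{\N\}$ but, since negation-free programs compute monotone queries, in no fragment lacking $\N$. Condition~2 uses the intermediate-predicate-free squaring query from the proof of Theorem~\ref{thm:recursion}, inexpressible without recursion by Proposition~\ref{prop:recursion2}. Conditions~3 and~4 share one witness: the ``only $a$'s'' query of Lemma~\ref{prop:onlyA}, which is expressible with an equation (so lies in $\{\E\}$) and, by converting that equation into an auxiliary predicate, also with intermediate predicates (so lies in $\{\I\}$), yet by Lemma~\ref{prop:onlyA} is computable by no program lacking both $\I$ and $\E$; depending on whether Condition~3 or~4 fails I place it in $F_1$ through $\E$ or through $\I$. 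Condition~5 splits according to whether $\R$ or $\N$ witnesses its antecedent, invoking Theorem~\ref{thm:intermediateRPrim} (squaring query in $\{\I,\R\}$) or Theorem~\ref{thm:intermediateNPrim} (the black-nodes query in $\{\I,\N\}$), each inexpressible once $\I\notin F_2$.

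The main obstacle is the sufficiency direction: ensuring that the $\E$-elimination and $\I$-elimination steps yield a program whose feature set is contained in $F_2$ while the two rewrites do not interfere. The delicate points are that eliminating negated equations reintroduces negation (harmless, since a negated equation already forces $\N\in F_1$, hence $\N\in F_2$ by Condition~1), and that one must add the target feature to the fragment \emph{before} applying a redundancy theorem, because those theorems remove a feature only in the presence of another that is actually used. The unifying observation that keeps the necessity direction clean is that the single ``only $a$'s'' query witnesses both Condition~3 and Condition~4, reflecting the equivalence $\{\E\}\equiv\{\I\}$.
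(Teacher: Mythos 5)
Your proposal is correct and follows essentially the same route as the paper: normalize away $\A$ and $\P$ via their absolute redundancy, derive the five conditions in the only-if direction from the same primitivity results and witnesses (set difference for $\N$, the squaring query for $\R$ and for condition~5 with recursion, the black-nodes query for condition~5 with negation, and the ``only $a$'s'' query for conditions~3 and~4), and prove sufficiency by a case analysis reducing to Theorems~\ref{thm:equatons} and~\ref{thm:intermediate}. The only difference is organizational: the paper presents sufficiency as a decision tree (Figure~\ref{proofsubs}) branching first on whether $\N$ or $\R$ lies in $F_1$, whereas you branch on which of $\E,\I$ lies in $F_1\setminus F_2$; the resulting subsumption chains are the same.
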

\begin{figure*}
  \centering
  \begin{tikzpicture}[
  grow=right,
  every node/.style={font=\ttfamily\small,align=center},
  level 1/.style={level distance=30mm, sibling distance=50mm},
  level 2/.style={level distance=35mm, sibling distance=20mm},
  level 3/.style={level distance=40mm, sibling distance=15mm},
  level 4/.style={level distance=40mm, sibling distance=10mm},
  sloped=true,
  punkt/.style={rectangle, draw=blue!40!black!60, thick}
  ]
  \node {{$\N \isIn F_1$}\\or\\{$\R \isIn F_1$}}
    child {
      node {{$\I \isIn F_1$}\\or\\{$\E \isIn F_1$}}
      child {
        node [punkt] {$\hat{F_1} \subs \hat{F_2}$}
        \noB
      }
      child {
        node[punkt] [rectangle split, rectangle split, rectangle split parts=2, text ragged] {
        {by 3 and 4}
        \nodepart{second} {$\I \in F_2$ or $\E \in F_2$}
        }
        child {
          node[punkt] [rectangle split, rectangle split, rectangle split parts=2, text ragged] {
          {by thm \ref{thm:equatons} and \ref{thm:intermediate}}
          \nodepart{second} {$\hat{F_1} \subs \{\E\} \subs \hat{F_2}$}
          }
          edge from parent node [below] {$\E \in F_2$}
        }
        child {
          node[punkt] [rectangle split, rectangle split, rectangle split parts=2, text ragged] {
          {by thm \ref{thm:equatons}}
          \nodepart{second}  {$\hat{F_1} \subs \{\I\} \subs \hat{F_2}$}
          }
          edge from parent node [above] {$\I \in F_2$}
        }
        \yesB
      }
      \noB
    }
    child {
      node {$\I \isIn F_2$}
      child {
        node {$\E \isIn F_1$}
        child {
          node[punkt] [rectangle split, rectangle split, rectangle split parts=2, text ragged] {
          {by 1 and 2}
          \nodepart{second} {$\hat{F_1} \subs \hat{F_2}$}
          }
          \noB
        }
        child {
          node[punkt] [rectangle split, rectangle split, rectangle split parts=2, text ragged] {
          {by 1 and 2}
          \nodepart{second} {$\hat{F_1} \subs \hat{F_2}$}
          }
          \yesB
          node [above] {by 3, $\E \in F_2$}
        }
        \noB
        node [below] {by 5, $\I \not \in F_1$}
      }
      child {
        node[punkt] [rectangle split, rectangle split, rectangle split parts=2, text ragged] {
        {by 1, 2, and thm \ref{thm:equatons}}
        \nodepart{second} {$\hat{F_1} \subs (\hat{F_1} \cup \{\I\} - \{\E\})
        \subs \hat{F_2}$}
        }
        \yesB
      }
      \yesB
    }
  ;
  \end{tikzpicture}
  \caption{If-direction of Theorem~\ref{thm:subs}.}
  \label{proofsubs}
\end{figure*}


\begin{proof}
For the only-if direction, we verify the five conditions,
assuming $F_1 \subs F_2$.
  \begin{enumerate}
  \item Immediate from the primitivity of negation. We have not
  stated this primitivity as a theorem because it is so clear
  (any fragment without negation can express only monotone
  queries; with negation we can express set difference which is
  not monotone).
  \item Immediate from primitivity of recursion.
  \item Immediate from Theorem~\ref{thm:equationsPrim}.
  \item Assume $\I \in F_1 \land \R \not \in F_1 \land \N \not \in F_1 \land \E
  \not \in F_2 \land \I \not \in F_2$.  By Theorem~\ref{thm:equatons}, we have
  $\{\E\} \leq F_1$.  Now Theorem~\ref{thm:equationsPrim} leads to a
  contradiction with $F_1 \leq F_2$.
  \item Immediate from Theorems \ref{thm:intermediateNPrim} and
  \ref{thm:intermediateRPrim}.
  \end{enumerate}

For the if-direction, since arity
and packing are redundant, $F_1 \leq F_2$ if and only if
  $\Hat{F_1} \subs \Hat{F_2}$, where $\hat{F} = F - \{\A,\P\}$.
Now Figure~\ref{proofsubs} (on page~\pageref{proofsubs}) infers
  $\Hat{F_1} \subs \Hat{F_2}$ from the five conditions and the
  redundancy results.
\end{proof}

\section{Sequence relational algebra} \label{secalg}

Given the importance of algebraic query plans for database query
execution, we show here how to extend the classical relational
algebra to obtain a language equivalent to recursive-free
Sequence Datalog programs.  We note that a similar
language, while calculus-based rather than algebra-based, is
the language StriQuel proposed by Grahne and Waller
\cite{gw_sql}.

The relational algebra, with operators projection; equality
selection; union; difference; and cartesian
product, is well known \cite{ahv_book,ullman}.  To extend this
algebra to our data model (Section~\ref{secmodel}),
we generalize the selection and projection operators and
add two extraction operators.  Let $R$ be an $n$-ary relation.
\begin{description}
\item[Selection:] The classical
equality selection $\sigma_{\$i=\$j}(R)$, with
$i,j\in\{1,\dots,n\}$, returns $\{t \in R \mid t_i = t_j\}$.  We
now allow path expressions $\alpha$ and $\beta$
over the variables $\$1$, \dots,$\$n$ and have the selection
operator $$ \sigma_{\alpha=\beta}(R) := \{t \in R
\mid t(\alpha)=t(\beta)\}. $$ Here, $t$ is viewed as the valuation that
maps $\$i$ to $t_i$ for $i=1,\dots,n$.
\item[Projection:] For path expressions $\alpha_1$, \dots,
$\alpha_p$ over variables $\$1,\dots,\allowbreak \$n$ as above, we define
$$ \pi_{\alpha_1,\dots,\alpha_p}(R) :=
\{(t(\alpha_1),\dots,t(\alpha_p)) \mid t \in R\}. $$
\item[Unpacking:] For $i \in \{1,\dots,n\}$, the operator
${\unpacko}_i(R)$ returns $$\{(t_1,\dots,t_{i-1},s,t_{i+1},\dots,t_n) \mid
(t_1,\dots,t_{i-1},\pack{s},t_{i+1},\dots,t_n) \in R\}. $$
\item[Substrings:] $ \subu_i(R)$ equals $$ \{(t_1,\dots,t_n,s) \mid
(t_1,\dots,t_n) \in R \text{ and $s$ is a substring of $t_i$}\}.
$$
\end{description}

We could also have defined a more powerful unpacking operator,
which extracts components from paths using path expressions,
similar to the use of path expressions in Sequence Datalog.
Such an operator is useful in practice
but can for theoretical purposes be simulated using the given
operators, as we will show.

``Sequence relational algebra'' expressions over a schema $\sch$,
built up using the above operators from the relation names of
$\sch$ and constant relations, are defined as usual.  We have, as
expected, the following theorem.  Note that this result applies
for arbitrary instances, not only for flat inputs and flat
outputs.

\begin{theorem}
For every program $\pr{P}$ without recursion and every IDB
relation name $T$, there exists a sequence relational
algebra expression $E$ such
that for every instance $I$, we have $\pr P(I)(T)=E(I)$.  The
converse statement holds as well.
\end{theorem}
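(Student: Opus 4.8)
The plan is to prove both directions by inductive translations, mirroring the classical equivalence of nonrecursive stratified Datalog with relational algebra, but using the three generalized operators to absorb the sequence-specific features: path expressions in heads and bodies, packing, and path variables that range over substrings. Because the statement is for \emph{arbitrary} instances (not just flat ones), I cannot invoke the flat-input normalizations of Section~\ref{secpure}; instead I would rely directly on $\unpacko$ to deal with packing wherever it occurs.

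For the direction from algebra to Datalog, I would induct on the structure of the expression $E$, producing for each subexpression a fresh IDB relation computed by a small set of rules. Union becomes two rules sharing a head; cartesian product a single rule; difference a rule with a negated IDB atom (which keeps the program nonrecursive and the negation stratified, since the subtracted relation is already computed); generalized selection $\sigma_{\alpha=\beta}$ a rule carrying the equation $\alpha=\beta$ in its body; generalized projection a rule whose head uses the corresponding head path expressions, possibly involving $\pack{\cdot}$; unpacking $\unpacko_i$ a rule whose $i$th body component is $\pack{\pvar s}$; and the substring operator $\subu_i$ a rule of the form $\mathit{Out}(\pvar 1,\dots,\pvar n,\pvar s) \gets R(\pvar 1,\dots,\pvar n) \land \pvar i = \pvar u \conc \pvar s \conc \pvar v$. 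Composing these according to the parse tree of $E$ yields the desired nonrecursive program, and correctness is immediate from the operator semantics.

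The substantive direction is from Datalog to algebra. Since $\pr P$ is nonrecursive, its IDB relation names admit a topological order in which each is defined only from EDB relations and earlier IDB relations; I would build an algebra expression for each IDB relation in this order, taking the union over its defining rules. For a single rule $H \gets B$ the goal is an expression $E_B$ over columns indexed by the rule's variables whose tuples are exactly the satisfying valuations of $B$; the head is then a generalized projection of $E_B$ along the head path expressions (which may reintroduce packing). To build $E_B$ I would treat each positive predicate $R(e_1,\dots,e_k)$ with a ``parsing'' subroutine returning all bindings $\nu$ with $(\nu(e_1),\dots,\nu(e_k)) \in R$: $\unpacko_i$ descends through components whose packing structure forces packed values, while $\subu_i$ together with concatenation-selections $\sigma_{\alpha=\beta}$ splits a path value among the path variables of a component. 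Shared variables across predicates are enforced by cartesian product followed by equality selections; positive equations become generalized selections; and negated atoms, which by stratification refer to already-available relations, become set differences after aligning attributes.

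The main obstacle will be this parsing subroutine, that is, realizing the match of a \emph{fixed} path-expression pattern against data values using finitely many operators. Two points need care. First, a pattern with adjacent path variables admits a data-dependent number of decompositions; I would argue these are uniformly captured by $\subu_i$ (which yields \emph{all} substrings) constrained by a single concatenation-selection, so the operator count depends only on the pattern, not the instance. Second, an atomic variable must bind to an atomic value, which is strictly stronger than ``length-one path''. I would show atomicity is expressible: a length-one test asks for a value $v$ with $v \neq \emp$ that has no substring other than $\emp$ and $v$ (checkable via $\subu_i$ and difference), and a ``not packed'' test is obtained by subtracting from $R$ the subset with a packed $i$th component, which is computed by unpacking and repacking via $\unpacko_i$ followed by a $\pack{\cdot}$-projection; since every length-one value is atomic or packed, their combination isolates the atomic ones. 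With these two facts, the per-rule expression is assembled by structural induction on each component's packing structure, and the claimed equivalence on arbitrary instances follows.
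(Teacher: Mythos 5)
Your overall strategy is sound and, in its technical core, coincides with the paper's: the algebra-to-Datalog direction is a routine operator-by-operator simulation (the paper dismisses it as clear), and the Datalog-to-algebra direction hinges on exactly the observation you isolate, namely that matching a fixed path-expression pattern against stored paths is expressible by composing $\unpacko$ (to descend through the packing structure), $\subu$ (to generate all candidate factors), and generalized concatenation-selections $\sigma_{\alpha=\beta}$ (to keep only factorizations that tile the value), with the number of operators depending only on the pattern. Where you differ is in the bookkeeping: the paper first eliminates equations (using the redundancy of $\E$ in the presence of $\I$) and then proves a six-form normal-form lemma (Lemma~\ref{lemmaforms}) so that each normalized rule maps to one or two algebra operators, whereas you compile each rule directly and keep equations as selections. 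Your route is leaner, and you are more careful than the paper on a point it leaves implicit: forcing an atomic variable to bind to an atomic value. Your two tests (length one via $\subu$ and difference; unpackedness via $\unpacko$ followed by a repacking projection and difference) are correct and supply a detail hidden in the paper's phrase ``select the desired paths.''

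There is, however, one step that fails as written. You translate positive equations into generalized selections, but a selection can only test columns that already exist, whereas the paper's safety condition (limitedness) allows a variable to be bound solely through a chain of positive equations without occurring in any positive predicate. For a rule such as $S(\pvar x) \gets R(\pvar y) \land a \conc \pvar x = \pvar y \conc b$, the relation you assemble from the positive predicates has no column for $\pvar x$, so $\sigma_{a\conc\pvar x=\pvar y\conc b}$ is not applicable. The repair is available from your own toolbox: materialize the fully-bound side of the equation as a fresh column by generalized projection, then run your parsing subroutine on that column against the pattern on the other side, exactly as you do for a body predicate, iterating in the order given by the inductive definition of limited variables. (The paper sidesteps this by eliminating equations up front, which turns every binding equation into a pattern match against an intermediate predicate.) With that amendment your argument goes through.
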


That sequence relational algebra can be translated to Sequence
Datalog is clear.  Our approach to
translate in the other direction is for the most part standard.
We can make use of the following normal form.

\begin{lemma}\label{lemmaforms}
    Let $\pr P$ be a nonrecursive Sequence Datalog program that does not use
    equations.  Then there is a nonrecursive program $\pr P'$ computing the same
    query as $\pr P$ where each rule in $\pr P'$ has one of the following six
    forms:

		\begin{enumerate}
			\item $R_1(v_1, \ldots, v_n) \leftarrow R_2(e_1, \ldots, e_m)$;

			\item $R_1(v_1, \ldots, v_n, e) \leftarrow R_2(v_1, \ldots, v_n)$;

			\item $R_1(v_1,\ldots,v_n)\leftarrow R_2(x_1,\ldots,x_k),
      R_3(y_1,\ldots,y_\ell)$;

			\item $R_1(v_1,\ldots,v_n)\leftarrow R_2(v_1,\ldots,v_n),\neg
      R_3(v'_1,\ldots,v'_{m})$;

			\item $R_1(v'_1,\ldots,v'_m)\leftarrow R_2(v_1,\ldots,v_n)$;

			\item $R(p) \gets {}$.
		\end{enumerate}
The following restrictions apply:
    \begin{itemize}
      \item In all forms, $v_1,\ldots,v_n$ are distinct variables.
      Moreover, in forms 2 to 6, each $v_i$ must be a path variable.
      \item In form 3, the $x_i$ and $y_j$ are path variables and
      $\{v_1,\ldots,v_n\}$ is contained in $\{x_1, \ldots, x_k\} \cup
      \{y_1, \ldots, y_\ell\}$.
      \item In forms 4 and 5, $v'_1,\ldots,v'_m$ are distinct variables taken
      from $\{v_1,\ldots,v_n\}$.
      \item In form 6, $p$ is a path (constant relation).
    \end{itemize}
\end{lemma}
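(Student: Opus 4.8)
The plan is to normalize each rule of $\pr P$ separately into a short \emph{chain} of simple rules that introduce fresh higher-arity intermediate predicates, and then to glue the chains of rules sharing a head using the union semantics of Datalog. Because $\pr P$ is nonrecursive and each fresh predicate is made to depend only on predicates the original rule already depended on (together with the earlier fresh predicates in its own chain), the result stays nonrecursive; and since the construction never introduces an equation, the equation-free hypothesis is preserved. One fact drives the whole argument: recalling the definition of limited variables, in the absence of equations a variable is limited exactly when it occurs in a positive body predicate, so by safety \emph{every} variable of a rule --- including those inside negated literals --- occurs in some positive body predicate.

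First I would handle each \emph{positive} body literal $R(e_1,\ldots,e_m)$ in isolation. Letting $\bar v$ be the list of distinct variables occurring in it, I introduce a fresh predicate $P$ and the rule $P(\bar v)\leftarrow R(e_1,\ldots,e_m)$, which is of form~1; here all path expressions (possibly with packing and with repeated variables, whose matching is enforced automatically) are confined to the single body atom, while the head only lists variables. Joining the resulting predicates $P_1,\ldots,P_q$ one pair at a time via form~3 rules $Q(\bar v_i\cup\bar v_j)\leftarrow P_i(\bar v_i),P_j(\bar v_j)$ --- where variables shared between the two atoms effect a natural join --- yields a single predicate $M_{+}$ over the set $\bar v$ of all variables of the rule, holding exactly the valuations satisfying the positive part of the body. (A ground rule with empty body is already of form~6; a rule whose body has no positive literal is safe only if it is variable-free, a small case I would treat by seeding a constant fact of form~6.)

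Next I would eliminate the \emph{negated} literals. For a literal $\neg R(e_1,\ldots,e_m)$ with distinct variables $\bar w$, I first capture its positive version by a form~1 rule $N(\bar w)\leftarrow R(e_1,\ldots,e_m)$. The key observation is that whether $R(e_1,\ldots,e_m)$ holds under a valuation $\nu$ depends only on the restriction of $\nu$ to $\bar w$, so $(\nu(e_1),\ldots,\nu(e_m))\in R$ if and only if $\nu(\bar w)\in N$; hence $\neg R(e_1,\ldots,e_m)$ is equivalent to $\neg N(\bar w)$. By the safety fact above, $\bar w\subseteq\bar v$, so I can remove the literal with a form~4 antijoin $M'(\bar v)\leftarrow M_{+}(\bar v),\neg N(\bar w)$, threading $M_{+}\to M'\to M''\to\cdots$ through all negated literals to obtain a predicate $M(\bar v)$ holding exactly the valuations satisfying the \emph{entire} body. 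Finally I build the head $R_1(f_1,\ldots,f_k)$ by appending the head path expressions one at a time: a sequence of form~2 rules $M_i(\bar v,\ldots,f_i)\leftarrow M_{i-1}(\bar v,\ldots)$ produces a relation whose last $k$ columns carry the values of $f_1,\ldots,f_k$, and a concluding form~5 rule $R_1(w_1,\ldots,w_k)\leftarrow M_k(\bar v,w_1,\ldots,w_k)$ projects onto those distinct fresh columns. Doing this for every rule and taking the union of the form~5 rules produced for each original head gives the desired $\pr{P'}$.

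I would flag two bookkeeping points rather than one large obstacle. The first is the requirement that in forms~2--6 every argument be a \emph{path} variable: an atomic variable $@x$ merely constrains its column to hold length-one paths, and that constraint is already enforced at the matching step (form~1) against the original predicate, so downstream predicates may safely read the corresponding column with a path variable. The second, and the genuinely delicate point, is the correctness of the negation step --- one must verify that confining a negated pattern to an auxiliary predicate $N$ over its own variables and then antijoining on those variables faithfully reproduces the original semantics; this is precisely where the safety-derived inclusion $\bar w\subseteq\bar v$ and the functional dependence of the pattern's truth on $\bar w$ are used. The remaining checks --- preservation of nonrecursiveness and that the union of chains computes the same query --- are routine.
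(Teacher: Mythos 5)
Your proposal is correct, and its global architecture coincides with the paper's: per rule, extract each positive atom into a fresh predicate over its variables via a form-1 rule, combine these pairwise with form-3 joins, peel off negated literals with form-4 antijoins, build the head expressions one column at a time with a chain of form-2 rules, and finish with a form-5 projection, taking the union over the chains of all rules sharing a head. You also correctly isolate the one fact that makes the whole thing safe, namely that without equations every variable of a safe rule (including those inside negated literals) occurs in a positive body atom. The single place where you genuinely diverge is the treatment of a negated literal $\lnot N(e_1,\dots,e_m)$. You introduce a fresh predicate $N'(\bar w)\leftarrow N(e_1,\dots,e_m)$ over the literal's own variables $\bar w$ (a form-1 rule) and then antijoin $\lnot N'(\bar w)$ directly; the correctness hinges on the observation that $\nu(\bar e)\in N$ depends only on $\nu|_{\bar w}$, which you verify. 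The paper instead keeps the negation on the \emph{original} relation $N$: it materializes the values of $e_1,\dots,e_m$ as fresh columns $v'_1,\dots,v'_m$ via a chain of form-2 rules, writes the form-4 rule with $\lnot N(v'_1,\dots,v'_m)$, and then projects the auxiliary columns away with a form-5 rule. Both are valid; yours is shorter per negated literal and reuses the form-1 machinery, while the paper's avoids creating new negated IDB predicates (the set of relations appearing under negation is unchanged), a distinction that is immaterial for nonrecursive programs but keeps the stratification bookkeeping trivial. Your two flagged bookkeeping points (reading atomic-variable columns with path variables downstream, and the edge cases of ground or positive-literal-free rules) are handled in the paper in essentially the way you sketch.
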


This lemma is stated for programs without equations, since we
know that equations are redundant in the presence of intermediate
predicates.
Given the normal form, extraction rules of the first form can be
expressed in the algebra as follows. First, by compositions of unpacking and
substring operations, we can generate all subpaths until the
maximum packing depth of the expressions appearing in the rule.
Using cartesian product and selection, we then select the
desired paths.  Rules of the second form are generalized
projections.  The other rules are handled as in the classical
relational algebra.  It remains to prove:

\begin{proof}[Proof of Lemma~\ref{lemmaforms}]
The conversion to normal form is best described on a general
example.  Consider the following one-rule Sequence Datalog program:
\begin{lstlisting}[style=Prolog-pygsty]
 T(a/b/c, @x/c/-y, -z/-z):- P1(-y/-y, -z/a, @u/d), P2(-z/@x/c, d), 
                            + N1(@x/-y/-z, a/@x), + N2(a/b, -y).
\end{lstlisting}
  In what follows, we call the rule that we process the main rule and its
  stratum the main stratum.

\paragraph{Step 1: Get variables from positive literals}

\paragraph{Step 1.1}
Replace every occurrence of a positive
      atom $P(e_1, \ldots, e_m)$ by a new predicate $H(v_1, \ldots, v_n)$ where
      $\{v_1, \ldots, v_n\}$ is the set of variables used in the
      atom.
For each $H$ add a new rule of the form $H(v_1, \ldots, v_n)
      \leftarrow P(e_1, \ldots, e_m)$. Note that these set of rules are
      guaranteed to be form~1. Moreover, every \emph{atomic} variable in the
      main rule should be replaced by a new \emph{path} variable.

In case the positive atom does not use variables, then we replace every
occurrence by a new predicate $H(\pv v)$ with a fresh variable $\pv v$.
To get this $H$, we add the two rules $H' \leftarrow P(e_1, \ldots, e_m)$ and
$H(a) \leftarrow H'$ for a new predicate $H'$ and $a \in \dom$.  Note that the
first rule is of form~1, while the second added rule is of form~2.
\begin{lstlisting}[style=Prolog-pygsty]
 H1(-y, -z, @u):- P1(-y/-y, -z/a, @u/d).
 H2(-z, @x):- P2(-z/@x/c, d).
 T(a/b/c, -x/c/-y, -z/-z):- H1(-y, -z, -u), H2(-z, -x), 
                            + N1(-x/-y/-z, a/-x), + N2(a/b, -y).
\end{lstlisting}

\paragraph{Step 1.2}
			\begin{itemize}
				\item When no positive atoms exist in the main rule, then the rule has
        no variables. Only in this case, we add to the main stratum a new rule
        of the form $R(a) \gets {}$, where $R$ is a new
        relation name and $a$ is some value from the domain. This added rule is
        of form~6.  Moreover, we add $R(\pv v)$ to the body of the main rule,
        where $\pv v$ is a fresh path variable.

				\item Otherwise, this step should be repeated until only one positive
        atom remains in the main rule. We remove two positive atoms $H_i(x_1,
        \ldots, x_n)$ and $H_j(y_1, \ldots, y_m)$, and replace them with
        $H(v_1, \ldots, v_k)$, where $H$ is a fresh predicate name, and the set
        of variables $v$s is the union of the set of $x$s and $y$s. In
        addition, we introduce a new rule of the form $$ H(v_1, \ldots, v_k) \gets {}  H_i(x_1, \ldots, x_n), H_j(y_1, \ldots, y_m) $$ in the main
        stratum. This rule is of form~3.
			\end{itemize}
\begin{lstlisting}[style=Prolog-pygsty]
 H1(-y, -z, @u):- P1(-y/-y, -z/a, @u/d).
 H2(-z, @x):- P2(-z/@x/c, d).
 H(-y, -z, -u, -x):- H1(-y, -z, -u), H2(-z, -x).
 T(a/b/c, -x/c/-y, -z/-z):- H(-y, -z, -u, -x), 
                            + N1(-x/-y/-z, a/-x), + N2(a/b, -y).
\end{lstlisting}

\paragraph{Step 2: Separate each negative literal in an
intermediate rule.}

\paragraph{Step 2.1} Let $H(v_1, \ldots, v_n)$ be the only
      positive atom in the body of the rule. Every
      literal
$\lnot N(e_1, \ldots, e_m)$ is replaced by a predicate $HN(v_1, \ldots,
      v_n)$, where $HN$ is a new relation name. Moreover, we add a rule of the
      form $$ HN(v_1, \ldots, v_n) \gets {} H(v_1, \ldots, v_n),
      \lnot N(e_1, \ldots, e_m) $$ to the main stratum, and we remove
      $H(v_1, \ldots, v_n)$ from the main rule.
\begin{lstlisting}[style=Prolog-pygsty]
 H1(-y, -z, @u):- P1(-y/-y, -z/a, @u/d).
 H2(-z, @x):- P2(-z/@x/c, d).
 H(-y, -z, -u, -x):- H1(-y, -z, -u), H2(-z, -x).
 HN1(-y, -z, -u, -x):- H(-y, -z, -u, -x), + N1(-x/-y/-z, a/-x).
 HN2(-y, -z, -u, -x):- H(-y, -z, -u, -x), + N2(a/b, -y).
 T(a/b/c, -x/c/-y, -z/-z):- HN1(-y, -z, -u, -x), 
                             HN2(-y, -z, -u, -x).
\end{lstlisting}

\paragraph{Step 2.2} We do the same as in step 1.2, leaving us in the end with a
      single positive atom holding the variables from the original rule. All
      the rules introduced by this step are of form~3.
\begin{lstlisting}[style=Prolog-pygsty]
 H1(-y, -z, @u):- P1(-y/-y, -z/a, @u/d).
 H2(-z, @x):- P2(-z/@x/c, d).
 H(-y, -z, -u, -x):- H1(-y, -z, -u), H2(-z, -x).
 HN1(-y, -z, -u, -x):- H(-y, -z, -u, -x), + N1(-x/-y/-z, a/-x).
 HN2(-y, -z, -u, -x):- H(-y, -z, -u, -x), + N2(a/b, -y).
 HN(-y, -z, -u, -x):- HN1(-y, -z, -u, -x), HN2(-y, -z, -u, -x).
 T(a/b/c, -x/c/-y, -z/-z):- HN(-y, -z, -u, -x).
\end{lstlisting}

\paragraph{Step 3: Generate negated expressions.}

		We next work on the rules that were introduced to deal with the negated
    atoms.

\paragraph{Step 3.1}
In step~2.1 we added rules with negative literals:
\[HN(v_1, \ldots, v_n) \gets {} H(v_1, \ldots, v_n), \lnot N(e_1, \ldots, e_m)\]
For each such added rule, we define a sequence of rules in order
      to generate the values for the expressions $e_i$. Since our rule is safe
      from the beginning, we are guaranteed that all the variables used in these
      expressions are among the $v$s.

			We inductively generate $m$ rules as follows (where the $v'$s are fresh
      variables) and add them to the main stratum:
			\begin{enumerate}
				\item $N_1(v_1, \ldots, v_n, e_1) \gets {}  H(v_1, \ldots, v_n)$
				\item for $1 < i \leq m$, the rule
			\end{enumerate}
\[N_i(v_1, \ldots, v_n, v'_1, \ldots, v'_{i-1}, e_i) \gets {}   N_{i-1}(v_1, \ldots, v_n, v'_1, \ldots, v'_{i- 1}).\]

Each one of the above rules is of form~2. In addition, we replace
      $H(v_1, \ldots, v_n)$ in the rule under consideration by
$$ N_m(v_1, \ldots, v_n, v'_1, \ldots, v'_{m}). $$
Moreover, we replace $\lnot N(e_1,
      \ldots, e_m)$ by $\lnot N(v'_1, \ldots, v'_m)$.
\begin{lstlisting}[style=Prolog-pygsty]
 H1(-y, -z, @u):- P1(-y/-y, -z/a, @u/d).
 H2(-z, @x):- P2(-z/@x/c, d).
 H(-y, -z, -u, -x):- H1(-y, -z, -u), H2(-z, -x).
 N11(-y, -z, -u, -x, -x/-y/-z):- H(-y, -z, -u, -x).
 N21(-y, -z, -u, -x, a/b):- H(-y, -z, -u, -x).
 N12(-y, -z, -u, -x, -n11, a/-x):- N11(-y, -z, -u, -x, -n11).
 N22(-y, -z, -u, -x, -n21, -y):- N21(-y, -z, -u, -x, -n21).
 HN1(-y, -z, -u, -x):- N12(-y, -z, -u, -x, -n11, -n12), 
                         + N1(-n11, -n12).
 HN2(-y, -z, -u, -x):- N22(-y, -z, -u, -x, -n21, -n22), 
                         + N2(-n21, -n22).
 HN(-y, -z, -u, -x):- HN1(-y, -z, -u, -x), HN2(-y, -z, -u, -x).
 T(a/b/c, -x/c/-y, -z/-z):- HN(-y, -z, -u, -x).
\end{lstlisting}

\paragraph{Step 3.2} We have now obtained rules of the form \[HN(v_1, \ldots,
      v_n) \gets {} N_m(v_1, \ldots, v_n, v'_1, \ldots, v'_m), \lnot N(v'_1,
      \ldots, v'_m).\]
We now further replace them with \[HN(v_1, \ldots, v_n) \gets {}    FN(v_1, \ldots, v_n, v'_1, \ldots, v'_m);\]
where $FN$ is a new
      relation name. Now this rule is of form~5. Moreover, we add the rule
\begin{multline*}
FN(v_1, \ldots, v_n, v'_1, \ldots, v'_m) \gets {} N_m(v_1, \ldots, v_n, v'_1, \ldots, v'_m), \lnot N(v'_1, \ldots, v'_m),
\end{multline*}
      which is of form~4, to the main stratum.
\begin{lstlisting}[style=Prolog-pygsty]
 H1(-y, -z, @u):- P1(-y/-y, -z/a, @u/d).
 H2(-z, @x):- P2(-z/@x/c, d).
 H(-y, -z, -u, -x):- H1(-y, -z, -u), H2(-z, -x).
 N11(-y, -z, -u, -x, -x/-y/-z):- H(-y, -z, -u, -x).
 N21(-y, -z, -u, -x, a/b):- H(-y, -z, -u, -x).
 N12(-y, -z, -u, -x, -n11, a/-x):- N11(-y, -z, -u, -x, -n11).
 N22(-y, -z, -u, -x, -n21, -y):- N21(-y, -z, -u, -x, -n21).
 FN1(-y, -z, -u, -x, -n11, -n12):-
     N12(-y, -z, -u, -x, -n11, -n12), + N1(-n11, -n12).
 FN2(-y, -z, -u, -x, -n21, -n22):- 
     N22(-y, -z, -u, -x, -n21, -n22), + N2(-n21, -n22).
 HN1(-y, -z, -u, -x):- FN1(-y, -z, -u, -x, -n11, -n12).
 HN2(-y, -z, -u, -x):- FN2(-y, -z, -u, -x, -n21, -n22).
 HN(-y, -z, -u, -x):- HN1(-y, -z, -u, -x), HN2(-y, -z, -u, -x).
 T(a/b/c, -x/c/-y, -z/-z):- HN(-y, -z, -u, -x).
\end{lstlisting}

\paragraph{Step 4: Generate final head expressions.}

		We are now left to work on the final rule which is normalized in a similar
    way as step 3.1. The final rule is of the form $T(e_1, \ldots, e_m) \gets {}    H(v_1, \ldots, v_n)$, where by safety it is guaranteed that any
    variable appearing in any of the $e$s is among the $v$s.

		We inductively generate $m$ rules as follows (where the $v'$s are fresh
    variables):
		\begin{enumerate}
			\item $T_1(v_1, \ldots, v_n, e_1) \gets {}  H(v_1, \ldots, v_n)$
			\item for $1 < i \leq m$, the rule
		\end{enumerate}
\[T_i(v_1, \ldots, v_n, v'_1, \ldots, v'_{i-1}, e_i) \gets {}   T_{i-1}(v_1, \ldots, v_n, v'_1, \ldots, v'_{i- 1}).\]

Each one of the above rules is of form~2. The last thing to be done is
to update the main rule to 
\[T(v'_1, \ldots, v'_m) \gets {}    T_m(v_1, \ldots, v_n, v'_1, \ldots, v'_m).\]
Now, this rule is of form~5.
\begin{lstlisting}[style=Prolog-pygsty]
 H1(-y, -z, @u):- P1(-y/-y, -z/a, @u/d).
 H2(-z, @x):- P2(-z/@x/c, d).
 H(-y, -z, -u, -x):- H1(-y, -z, -u), H2(-z, -x).
 N11(-y, -z, -u, -x, -x/-y/-z):- H(-y, -z, -u, -x).
 N21(-y, -z, -u, -x, a/b):- H(-y, -z, -u, -x).
 N12(-y, -z, -u, -x, -n11, a/-x):- N11(-y, -z, -u, -x, -n11).
 N22(-y, -z, -u, -x, -n21, -y):- N21(-y, -z, -u, -x, -n21).
 FN1(-y, -z, -u, -x, -n11, -n12):-
     N12(-y, -z, -u, -x, -n11, -n12), + N1(-n11, -n12).
 FN2(-y, -z, -u, -x, -n21, -n22):- 
     N22(-y, -z, -u, -x, -n21, -n22), + N2(-n21, -n22).
 HN1(-y, -z, -u, -x):- FN1(-y, -z, -u, -x, -n11, -n12).
 HN2(-y, -z, -u, -x):- FN2(-y, -z, -u, -x, -n21, -n22).
 HN(-y, -z, -u, -x):- HN1(-y, -z, -u, -x), HN2(-y, -z, -u, -x).
 T1(-y, -z, -u, -x, a/b/c):- HN(-y, -z, -u, -x).
 T2(-y, -z, -u, -x, -t1, -x/c/-y):- T1(-y, -z, -u, -x, -t1).
 T3(-y, -z, -u, -x, -t1, -t2, -z/-z):- 
     T2(-y, -z, -u, -x, -t1, -t2).
 T(-t1, -t2, -t3):- T3(-y, -z, -u, -x, -t1, -t2, -t3).
\end{lstlisting}
\end{proof}

\section{Conclusion} \label{seconc}

Sequence databases and sequence query processing (e.g.,
\cite{zaniolo_seq}) were an active research topic twenty years
ago or more.  We hope our paper can revive interest in the topic,
given its continued relevance for advanced database applications.
Systems in use today do support sequences one way or another, but
often only nominally, without high expressive power or
performance.  This situation may cause application builders to
bypass the database system and solve their problem in an ad-hoc
manner.

Of course, to support data science, there is much current
research on database systems and query languages for arrays and
tensors, e.g.,
\cite{rusu_survey,tiledb,hutchison,polystorelalg,lara_chile}.
However, in this domain, applications are typically focused on
supporting linear algebra operations
\cite{jermaine_linearalgebra_cacm,hutchison,lara_chile}.  Such
applications are qualitatively different from the more generic
type of sequence database queries considered in this paper.

We note that other sequence query language approaches, not based
on Datalog, deserve attention as well.  There have been proposals
based on functional programming \cite{libkin_arrays}, on
structural recursion \cite{lists_ptime}, and on transducers
\cite{bonnermecca_transducers,blss_strings,grahne_strings,gw_seq}.
On the other hand, a proposal very close in spirit to Sequence
Datalog can be found in the work by Grahne and Waller
\cite{gw_sql} already mentioned in Section~\ref{secalg}.

Sequence Datalog is also a very useful language for dealing with
non-flat instances.  In this paper, for reasons we have
explained, we focused on queries from flat instances to flat
instances.  However, using packing, interesting data structures
can be represented in a direct manner.  For example, a tree with
root label $a$ and childtrees $T_1,\dots,T_n$ can be represented
by the path $a \cdot \langle T_1 \rangle \cdots \langle T_n
\rangle$ (where each $T_i$ is represented by a path
in turn).  Thus, Sequence Datalog can be used as an XML-to-XML
query language and more.

We conclude by recalling an intriguing theoretical open problem
already mentioned before \cite{hidders2017j}.  It can be stated
independently of Sequence Datalog, although we did stumble upon
the problem while thinking about Sequence Datalog.  Consider
monadic Datalog with stratified negation over sets of \emph{natural
numbers,} with natural number constants and variables, and
addition as the only operation.  Which functions on finite sets of
natural numbers are expressible in this language?

\bibliographystyle{plainurl}
\bibliography{database}

\end{document}